\newtheorem{proof}{Proof}
\newtheorem{definition}{Definition}
\newtheorem{remark}{Remark}
\newtheorem{lemma}{Lemma}
\newtheorem{theorem}{Theorem}
\renewenvironment{proof}{\textit{Proof.}}{\hfill $\square$}
\newcommand{\smat}[1]{\ensuremath{\left[\begin{smallmatrix}#1\end{smallmatrix}\right]}}
\newcommand{\bmat}[1]{\ensuremath{\begin{bmatrix}#1\end{bmatrix}}}
\newcommand{\tu}[1]{\textup{#1}}
\newcommand{\sa}[1]{\mathsf{#1}}
\newcommand{\hJac}{\ensuremath{\frac{\partial h}{\partial x}(x)}}
\newcommand{\hJacS}{\ensuremath{\tfrac{\partial h}{\partial x}(x)}}
\newcommand{\real}{\ensuremath{\mathbb{R}}}
\DeclareMathSymbol{\cdoT}{\mathord}{symbols}{"01}
\begin{document}

\begin{frontmatter}
\title{Data-driven design of safe control for polynomial systems} 

\thanks{This research is conducted under the auspices of the Centre for Data Science and Systems Complexity at the University of Groningen and is supported by a Marie Sk\l{}odowska-Curie COFUND grant, no. 754315.}

\author[rug]{Alessandro Luppi}\ead{a.luppi@rug.nl},    
\author[rug]{Andrea Bisoffi}\ead{a.bisoffi@rug.nl},              
\author[rug]{Claudio De Persis}\ead{c.de.persis@rug.nl},
\author[flo]{Pietro Tesi}\ead{pietro.tesi@unifi.it}  
  
\address[rug]{ENTEG and J.C. Willems Center for Systems and Control, University of Groningen, 9747 AG Groningen, The Netherlands}
\address[flo]{DINFO, University of Florence, 50139 Florence, Italy}

\begin{keyword}
Data-driven control; 
Polynomial Systems; 
Invariant Sets; 
Safe Control.               
\end{keyword}                            

\begin{abstract}                          
We consider the problem of designing an invariant set using only a finite set of input-state data collected from an unknown polynomial system in continuous time.
We consider noisy data, i.e., corrupted by an unknown-but-bounded disturbance.
We derive a data-dependent sum-of-squares program that enforces invariance of a set and also optimizes the size of the invariant set while keeping it within a set of user-defined safety constraints; the solution of this program directly provides a polynomial invariant set and a state-feedback controller.
We numerically test the design on a system of two platooning cars. 
\end{abstract}
\end{frontmatter}

\section{Introduction}
\label{sec:intro}	
Control systems are nowadays required not only to stabilize the process around a desired working point given by an equilibrium, but also to be able to steer away the state of the system from dangerous regions where it would not work properly.
The complement of this dangerous set is the so-called safe set and the design of control schemes to respect safety specifications goes by the name of safe control \cite{garone2017reference,wolff2005invariance,wabersich2018scalable,cortez2021safebydesign}.
The notion of safety is intuitively related to the notion of invariance, namely, the dynamical propriety that the state belongs to a certain set for all subsequent times after being initialized in there. 
Historically, a seminal result for invariance was Nagumo's theorem \cite{nagumo1942lage}, which has been of fundamental importance in the characterization of invariant sets for continuous-time systems. 
Since almost every system in practice is subject to some type of constraints on its states or outputs, the notion of invariance is very relevant in control applications to include safety constraints in the design. 
Indeed, problems related to safety and viability \cite{aubin2011viability} can be addressed by computing sets possessing invariance or closely-related properties.

In the control community, invariance for linear systems was extensively studied in the 1980-1990's, which resulted in a number of works surveyed in \cite{blanchini1999set}. For nonlinear systems, there has recently been a revived interest with the introduction of control Lyapunov-like functions tailored to enforce invariance, called control barrier functions \cite{xu2015robustness,xu2017correctness,wieland2007constructive,prajna2004safety}.
In this setting, the controller depends critically on the model used for design and recent studies \cite{jankovic2018robust,garg2021robust,lopez2020robust} focus on finding robust safe controllers to account for possibly inaccurate models. 
	In \cite{alan2021safe}, unmodeled dynamics are taken into account by adding a bounded disturbance on the input to find a robust control barrier function, in an input-to-state-safety fashion.

In an effort to reduce this critical dependence from the model, we implement a data-driven solution for safe control.
This work follows on from~\cite{bisoffi2020controller} on enforcing invariance with a controller designed directly from data and it extends \cite{bisoffi2020controller} to the case of polynomial nonlinear systems and by dropping the assumption that the set to make invariant is given.
Polynomial systems are a notable class of nonlinear systems widely used to model processes in engineering applications such as fluid dynamics \cite{chernyshenko2014polynomial,lakshmi2020finding} and robotics \cite{majumdar2013control}.
Casting nonlinear control design as an optimization problem faces the obstacle that it is generally computationally intractable to verify whether a multivariate (matrix) polynomial is nonnegative.
For a computationally viable approach, one can adopt a relaxation and verify instead whether such a polynomial is a sum-of-squares (SOS) since SOS optimization can be solved through semidefinite programming (SDP) \cite{parrilo2000structured,chesi2010lmi}.
In a model-based setting, SOS programs have been used to design stabilizing controllers for hybrid systems \cite{papachristodoulou2005tutorial}, for disturbance analysis in linear systems \cite{jarvis2005control} and to obtain inner-approximations of the basin of attraction \cite{tan2008stability}, to name a few applications.

Still, SOS programs suffer from some limitations that have been or are being addressed in the literature.
A first one is scalability of SOS programs: tools are available \cite{Lofberg2009} that automatically reduce the problem size without major computational costs, and recent works on large-scale semidefinite and polynomial optimization \cite{zheng2021chordal} improve scalability of SOS programs significantly.
Secondly, it is often the case that the obtained SOS program is bilinear in the decision variables. This occurs in the model-based case \cite{tan2008stability} and also in this paper. An iterative approach to solve these bilinear SOS programs is commonly used \cite{jarvis2005control}. Alternatively, there exist tools to solve these bilinear SOS programs directly, such as PENBMI and BMIBNB.
For these reasons, the limitations of SOS programs seem largely outweighed by their positive features.

From the side of (direct) data-driven control, this work is positioned in the literature thread \cite{markovsky2008data,coulson2019data,depersis2019formulas}, to name a few, which exploits the so-called fundamental lemma in~\cite[Th.~1]{willems2005note}.
Within this thread, \cite{bisoffi2020controller,dai2021semi,Guo2020,bisoffi2021petersen} are the most closely related works to this one.
As mentioned before, \cite{bisoffi2020controller} addresses also an invariance problem, but for systems with \emph{linear} dynamics and where the set to be rendered invariant is \emph{given}. The works \cite{dai2021semi,Guo2020} and \cite[\S 5]{bisoffi2021petersen} consider nonlinear input-affine polynomial systems as here, but the goal is data-driven almost global \cite{dai2021semi} or global \cite{Guo2020,bisoffi2021petersen} stabilization;
invariance is a fairly weaker dynamical property (e.g., solutions do not need to converge to an attractor within the invariant set), hence the conditions here are fairly less conservative and yet significant to enforce safety.

\textit{Contribution.} Our contribution is to enforce invariance using no model of the system to be controlled, but only a set of input-state data points collected from it in an experiment. 
We consider an input-affine nonlinear system with polynomial dynamics and a polynomial controller. 
This allows us to make the data-based invariance conditions tractable by using an SOS  relaxation and alternately solving two SOS programs.
In this work we consider the realistic setting when invariance needs to be guaranteed despite the presence of an unknown-but-bounded \cite{hjalmarsson1993discussion} additive noise in data.
Moreover, we show that also in the data-driven case it is possible to optimize the size of the invariant set while respecting safety constraints.
Finally, we provide numerical evidence to show the effectiveness of the approach, in particular on the practical example of two platooning cars.

\textit{Structure.} 
Section~\ref{sec:preli} recalls the theory enabling our results. 
In Section~\ref{sec:invProblem}, we set up the invariance problem for polynomial systems. 
In Section~\ref{sec:data_driven}, we derive a data-driven controller that solves the invariance problem, and also includes safety constraints.
In Section~\ref{sec:platoon}, we exemplify our data-based solution for two platooning cars.

\textit{Notation.}	For a matrix $A$, $A^{\top}$ denotes its transpose. 
A symmetric matrix $A \in \mathbb{R}^{n \times n}$ is positive semidefinite, i.e., $A \succeq 0$, if $x^{\top}Ax \geq 0$ for all $x \in \mathbb{R}^n$. 
Given two matrices $A$ and $B$, $A \succeq B$ means $A-B \succeq 0$. 
With $\mathbb{Z}_{\ge 0}$ we indicate the set of nonnegative integers. 
An identity matrix is denoted by $I$. 
For a positive semidefinite $A=A^\top$, $A^{1/2}$ denotes the unique positive semidefinite root of $A$.
For matrices $M$ and $N=N^\top$, we sometimes abbreviate $M N M^\top$ as $M N M^\top = M \cdot N [\star]^{\top}$.
For symmetric matrices $A$ and $C$, we sometimes abbreviate the symmetric matrix $\smat{A & B^\top\\ B & C}$ as $\smat{A & \star \\ B & C}$ or $\smat{A & B^\top\\ \star & C}$.

\section{Preliminaries}\label{sec:preli}	

In this section, we present a fundamental result from real algebraic geometry, the Positivstellensatz. We first report from \cite{chesi2010lmi,jarvis2005control} the notions needed to state that result.

We start from sum-of-squares (SOS) polynomials and SOS matrix polynomials.
A function $h: \mathbb{R}^n \rightarrow \mathbb{R}$ is a \textit{monomial} of degree $d$ in $x = (x_1, x_2, \dots, x_n) \in \mathbb{R}^n$ if 
\begin{equation*}
	h(x) = a x_1^{q_1} x_2^{q_2} \cdots x_n^{q_n}
\end{equation*}
with $a \in \mathbb{R}$, $q_1, \dots , q_n \in \mathbb{Z}_{\ge 0}$ and $d = \sum_{i=1}^{n} q_i$. 
A function $h: \mathbb{R}^n \rightarrow \mathbb{R}$ is a \textit{polynomial} if it is a sum of (a finite number of) monomials $h_1$, $h_2, \ldots : \mathbb{R}^n \rightarrow \mathbb{R}$ with finite degree, and the largest degree of the $h_i$'s is the degree of $h$. $\Pi$ denotes the set of polynomials.

\begin{definition}[SOS polynomial]
	$h \in \Pi$ is an \emph{SOS} polynomial if there exist $h_1, \dots, h_k \in \Pi$ such that $h(x) = \sum_{i=1}^{k} h_i(x)^2$. The set of SOS polynomials $h \in \Pi$ is denoted as $\Sigma$.
\end{definition}

A function $H : \real^n \to \real^{r_1 \times r_2}$ is a \emph{matrix polynomial} if the entries of $H$ satisfy $h_{ij} \in \Pi$ for all $i = 1$, \dots, $r_1$ and $j = 1$, \dots, $r_2$, and the largest degree of the entries of $H$ is the degree of $H$.
The set of matrix polynomials $H : \real^n \to \real^{r_1 \times r_2}$ is denoted by $\Pi_{r_1,r_2}$.
A function $H : \real^n \to \real^{r_1 \times r_2}$ is a \emph{square matrix polynomial} if $r_1 = r_2$.
The set of square matrix polynomials $H : \real^n \to \real^{r \times r}$ is denoted by $\Pi_r$.

\begin{definition}[SOS matrix polynomial \cite{chesi2010lmi}]\label{def:matrixSOS}
	$H \in \Pi_r$ is an \emph{SOS matrix polynomial} if there exist $H_1$, \dots, $H_k \in \Pi_r$ such that $H(x) = \sum_{i=1}^{k} H_i(x)^{\top} H_i(x)$. The set of SOS matrix polynomials $H \in \Pi_r$ is denoted by $\Sigma_r$.		
\end{definition}

SOS polynomials are instrumental to define three sets of polynomials appearing in the Positivstellensatz.

\begin{definition}[Multiplicative monoid {\cite[Def.~3]{jarvis2005control}}]
	\label{def:mult monoid}
	Given $g_{1}$, \dots, $g_{t} \in \Pi$, the \emph{multiplicative monoid} generated by $g_{j}$'s is the set of all finite products of $g_{j}$'s, including $1$ (i.e., the empty product). It is denoted by $\mathcal{M}(g_1, \dots, g_t)$, with $\mathcal{M}(\emptyset):=1$ for completeness.
\end{definition}
An example is $\mathcal{M}(g_1, g_2)=\{g_1^{k_1} g_2^{k_2} \colon k_1, k_2 \in \mathbb{Z}_{\ge 0}\}$.

\begin{definition}[Cone {\cite[Def.~4]{jarvis2005control}}]
	Given $f_{1}, \dots, f_{r} \in \Pi$, the \emph{cone} generated by $f_{i}$'s is
	$\mathcal{C} (f_{1}, \ldots, f_{r}):=  \big\{s_{0}+\sum_{i=1}^{l} s_{i} b_{i} \colon l \in \mathbb{Z}_{\ge 0}, s_{i} \in \Sigma, b_{i} \in \mathcal{M}(f_{1}, \ldots, f_{r})\big\}$.
\end{definition}

If $s \in \Sigma$ and $f \in \Pi$, then $f^{2} s \in \Sigma$. Then, a cone of $\{f_{1}, \dots, f_{r}\}$ can be expressed as a sum of $2^{r}$ terms without loss of generality.
An example is $\mathcal{C}(f_{1}, f_{2})=$ $\{s_{0}+s_{1} f_{1}+s_{2} f_{2}+s_{3} f_{1} f_{2} \colon s_{0}, \dots, s_{3} \in \Sigma\}$ where terms like $s_4 f_1^2$ or $s_5 f_2^2$ with $s_4$, $s_5 \in \Sigma$ are not needed since they are captured by $s_0$ anyway.

\begin{definition}[Ideal {\cite[Def.~5]{jarvis2005control}}]
	\label{def:ideal}
	Given $h_{1}, \dots, h_{u} \in \Pi$, the \emph{ideal} generated by $h_k$'s is
	$\mathcal{J}(h_{1}, \dots, h_{u}):=\{\sum_{k=1}^{u} h_k p_k \colon p_k \in \Pi\}$.
\end{definition}
An example is $\mathcal{J}(h_1,h_2)=\{ h_1 p_1 + h_2 p_2 \colon p_1, p_2 \in \Pi\}$. 
With Definitions~\ref{def:mult monoid}-\ref{def:ideal}, we finally recall the version in~\cite{jarvis2005control} of the Positivstellensatz (P-Satz), in the next fact.

\begin{fact}[P-Satz {\cite[Th.~1]{jarvis2005control}}]
	\label{fact:psatz}
	Given $f_{1}, \dots, f_{r} \in \Pi$, $g_{1}, \dots, g_{t} \in \Pi$, and $h_{1}, \dots, h_{u} \in \Pi$, the following are equivalent.
	\begin{enumerate}
		\item The set
		\begin{equation*}
		\left\{x \in \mathbb{R}^{n} \colon \begin{array}{c}
			f_{1}(x) \geq 0, \dots, f_{r}(x) \geq 0 \\
			g_{1}(x) \neq 0, \dots, g_{t}(x) \neq 0 \\
			h_{1}(x)=0, \dots, h_{u}(x)=0
		\end{array}\right\} = \emptyset.
		\end{equation*}
		\item 	There exist polynomials $f \in \mathcal{C}(f_{1}, \dots, f_{r})$, $g \in \mathcal{M}(g_{1}, \dots, g_{t})$, $h \in \mathcal{J}(h_{1}, \dots, h_{u})$ such that
		\begin{equation*}
		f+g^{2}+h=0.
		\end{equation*}
	\end{enumerate}
\end{fact}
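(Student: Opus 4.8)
The plan is to prove the two implications separately: the forward direction $(2) \Rightarrow (1)$ is elementary, while the reverse $(1) \Rightarrow (2)$ carries the real content.

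For $(2) \Rightarrow (1)$, I would argue by contradiction: suppose the set in item~1 contains a point $x^\star$. Every $f \in \mathcal{C}(f_1,\dots,f_r)$ is a sum of terms $s_i b_i$ with $s_i \in \Sigma$ and $b_i$ a product of the $f_j$'s, so $f(x^\star) \geq 0$; every $g \in \mathcal{M}(g_1,\dots,g_t)$ is a product of the $g_j$'s, each nonzero at $x^\star$, so $g(x^\star)^2 > 0$; and every $h \in \mathcal{J}(h_1,\dots,h_u)$ is a polynomial combination of the $h_k$'s, each vanishing at $x^\star$, so $h(x^\star) = 0$. Evaluating $f + g^2 + h$ at $x^\star$ then yields a strictly positive number, contradicting $f + g^2 + h = 0$; hence the set must be empty.

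For $(1) \Rightarrow (2)$, the natural route is through the theory of ordered fields, and I would prove the contrapositive: assuming no certificate $f + g^2 + h = 0$ exists, I would construct a point satisfying all the constraints, contradicting emptiness. First I would package the data into the cone $T := \mathcal{C}(f_1,\dots,f_r)$, the multiplicative monoid $M := \mathcal{M}(g_1,\dots,g_t)$, and the ideal $I := \mathcal{J}(h_1,\dots,h_u)$; nonexistence of a certificate says precisely that $T$, the squares drawn from $M$, and $I$ cannot be combined to represent $0$. The core step is a Zorn's-lemma maximality argument: among all cones containing $T$ that still admit no certificate relative to $M$ and $I$, I would select a maximal one $P$, then show that $P$ is a prime cone. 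Such a $P$ induces a total ordering on the quotient of $\Pi$ by a real prime ideal containing $I$, hence on the field of fractions of that integral domain, and embedding this ordered field into a real closed field $R$ produces a tuple $\xi$ with entries in $R$ for which $f_i(\xi) \geq 0$, $g_j(\xi) \neq 0$, and $h_k(\xi) = 0$.

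To finish I would invoke the Artin--Lang homomorphism theorem (equivalently, Tarski--Seidenberg transfer together with model completeness of real closed fields): since a point with the prescribed signs exists over the extension $R \supseteq \mathbb{R}$ and the constraints form a first-order formula in the language of ordered fields, a genuine $x^\star \in \mathbb{R}^n$ with the same signs must also exist, contradicting item~1. The main obstacle is precisely this reverse direction: the manipulations of cones, monoids, and ideals are routine, but showing that a maximal proper cone is prime and that it yields an honest real point requires the full real-algebraic-geometry apparatus (orderings, real closed fields, and Artin--Lang). This is exactly why the statement is imported here as a Fact rather than re-derived.
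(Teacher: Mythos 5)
The paper does not prove this statement: it is imported verbatim as a Fact from the cited reference, so there is no in-paper argument to compare yours against. Your proof of $(2)\Rightarrow(1)$ is complete and correct as written: evaluation at a hypothetical point $x^\star$ of the set in item~1 gives $f(x^\star)\ge 0$ (since every element of the cone is a sum of SOS polynomials times products of the $f_i$'s, all nonnegative at $x^\star$), $g(x^\star)^2>0$ (a finite product of nonzero reals, including the empty product $1$), and $h(x^\star)=0$, so $f+g^2+h$ cannot vanish identically. For $(1)\Rightarrow(2)$ you have correctly identified the standard real-algebraic-geometry route (the one in Bochnak--Coste--Roy or Marshall): reformulate nonexistence of a certificate as $-g^2\notin \mathcal{C}+\mathcal{J}$ for every $g$ in the monoid, extend the cone by Zorn's lemma to a maximal one, show maximality forces it to be a prime cone/ordering with support a real prime ideal containing $\mathcal{J}$, pass to the real closure of the ordered fraction field, and transfer the resulting point back to $\mathbb{R}^n$ by Artin--Lang or Tarski--Seidenberg. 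This is a faithful roadmap, but it is a roadmap: the two substantive lemmas you name (that a maximal proper cone is prime, and the transfer of the point to the base field) are stated rather than proved, so on its own your text is an outline of the hard direction rather than a self-contained proof. You acknowledge this explicitly, and it is consistent with how the paper itself treats the result, namely as a black-box citation; if a full derivation were required, those two lemmas are exactly where the work would have to go.
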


\section{Collecting data and enforcing invariance}
\label{sec:invProblem}

In this paper we consider polynomial systems of the form
\begin{equation}\label{eq:polysys}
	\dot{x} = f(x) + g(x)u
\end{equation}
where $x \in \mathbb{R}^{n}$ is the state, $u \in \mathbb{R}^m$ is the control input, $f$ and $g$ are polynomial vector fields. 
The specific expressions of $f$ and $g$ are unknown. 
The polynomial system~\eqref{eq:polysys} can be written into the linear-like form
\begin{equation}\label{polynSysOpen}
	\dot{x} = \sa{A}Z(x) + \sa{B}W(x)u
\end{equation}
where $\sa{A}\in \real^{n \times N_\sa{A}}$ and $\sa{B} \in \real^{n \times N_\sa{B}}$ are \emph{unknown} constant matrices, the \emph{known} $N_\sa{A} \times 1$ vector $Z(x)$ contains as entries the distinct monomials in $x$ that may appear in $f$, and the \emph{known} $N_\sa{B} \times m$ matrix $W(x)$ contains as entries the monomials that may appear in $g$. 
The conditions we will propose to design an invariant set are the same regardless of the choice of the monomials in $Z$ and $W$.
On the other hand, different choices of $Z$ and $W$ affect feasibility and goodness of the solution arising from those conditions, as is generally the case with model structure selection.
In Section~\ref{sec:platoon}, we will present guidelines for the choice of monomials in $Z$ and $W$.

We consider the control law $u = K(x)$ where $K \in \Pi_{m, 1}$ is to be designed.
The closed-loop dynamics results in
\begin{align}
	\!\dot{x} & = \sa{A} Z(x) + \sa{B} W(x) K(x) = \bmat{\sa{A} & \sa{B}} \bmat{Z(x) \\ W(x) K(x)} . \label{polynSysClosed}
\end{align}

Data are generated through an experiment in the presence of an additive disturbance $d$ as
\begin{equation}
	\label{polynSysDataGen}
	\dot x=  \sa{A} Z(x) + \sa{B} W(x) u + d.
\end{equation}
We apply an input sequence of $T$ elements, and measure the state and state derivative sequences generated by~\eqref{polynSysDataGen}; we sample uniformly these sequences at times $0$, $\tau_{\tu{s}}$, \dots, $(T-1)\tau_{\tu{s}}$ for sampling period $\tau_{\tu{s}} > 0$; this results in data points, for $j = 0, \dots, T-1$,
\begin{equation}
	\label{data instant}
	\dot{x}^j:=\dot{x}(j\tau_{\tu{s}}), \, z^j:= Z(x(j\tau_{\tu{s}})),\, v^j:= W(x(j\tau_{\tu{s}}))u(j\tau_{\tu{s}}).
\end{equation}	
A disturbance sequence given for $j = 0, \dots, T-1$ by
\[
d^j:=d(j\tau_{\tu{s}})
\]
acts during the experiment but is \emph{unknown}. Hence, the data generation mechanism is described by
\begin{equation}
	\label{data gen mech}
	\dot{x}^j = \sa{A} z^j + \sa{B} v^j + d^j, j = 0, 1, \dots, T-1.
\end{equation}

Our goal is to use the collected data points to obtain an invariant set for~\eqref{polynSysClosed} as specified in the next definition.

\begin{definition}[Invariant set]\label{def:invariant}
	For $a \colon \real^n \to \real^n$ polynomial, i.e., $a \in \Pi_{n,1}$, and for an arbitrary $x_0 \in \real^{n}$, denote $t \mapsto \alpha(t,x_0)$ the (unique maximal) solution to $\dot{x}=a(x)$ with initial condition $x_0= \alpha(0,x_0)$ and defined on the interval $[0,T(x_0))$ (with $T(x_0)$ possibly $+\infty$). A set $\mathcal{I}$ is said to be \emph{invariant} for $\dot{x}=a(x)$ if $x_0 \in \mathcal{I}$ implies that for all $t \in [0, T(x_0))$, $\alpha(t,x_0) \in \mathcal{I}$.		
\end{definition}

Consider the set
\begin{equation}\label{eq:invset}
	\mathcal{I}:=\{ x \in \mathbb{R}^n : h(x) \leq 0 \}
\end{equation}
with $h \in \Pi$. To impose that $\mathcal{I}$ is an invariant set according to Definition~\ref{def:invariant}, we require the condition
\begin{equation}\label{eq:inclusion}
	\{ x \in \mathbb{R}^n : h(x) = 0\} \subseteq \{ x \in \mathbb{R}^n : \frac{\partial h}{\partial x}(x)\dot{x} \leq -\epsilon\}
\end{equation}
where $\dot{x}$, used for brevity, takes the expression in~\eqref{polynSysClosed} and the parameter $\epsilon > 0$ is introduced to guarantee some degree of robustness at the boundary of $\mathcal{I}$.

\begin{remark}
	\label{remark:degree of rob}
	We emphasize that whereas we use noisy data for control design as per the data generation mechanism in~\eqref{data gen mech}, we design a controller to enforce \emph{nominal} invariance for $d=0$ as in~\eqref{polynSysClosed}, instead of robust invariance.
	Nonetheless, our design features some degree of robustness thanks to $\epsilon$, as we now show.
	Define the nominal closed-loop vector field in~\eqref{polynSysClosed} as $f_{\tu{n}}(x) :=\sa{A} Z(x) + \sa{B} W(x) K(x)$ and consider now the perturbed dynamics $\dot x = f_{\tu{n}}(x) + d$ with $|d|^2 \le \omega$ (as we will have later in~\eqref{in:bound}).
	$\mathcal{I}$ is robustly invariant for this perturbed dynamics as long as 
	\begin{equation}
		\label{cond for rob inv}
		\hJacS ( f_{\tu{n}} (x) + d ) \le 0 \quad \forall (x,d) \colon h(x) = 0, |d|^2 \le \omega
	\end{equation}
	By achieving \eqref{eq:inclusion}, we have that for all $x$ and $d$ with $h(x) = 0$ and $|d|^2 \le \omega$,
	\begin{equation*}
		\hJacS ( f_{\tu{n}} (x) + d ) \le -\epsilon + \left| \hJacS \right| \sqrt{\omega}.
	\end{equation*}
	Then, robust invariance in~\eqref{cond for rob inv} is achieved if $\left| \hJacS \right| \le \epsilon/\sqrt{\omega}$ for all $x$ such that $h(x) = 0$ or, equivalently, if
	\begin{equation*}
		\bmat{-\epsilon^2/\omega & \hJacS\\ \hJacS^\top & -I} \preceq 0 \quad \forall x \colon h(x) = 0,
	\end{equation*}
	which can be relaxed into an SOS condition and added to our optimization program (see later Theorem~\ref{cor:extra conditions}).
	For these reasons, we rather consider for simplicity nominal invariance and introduce $\epsilon > 0$ to guarantee nonetheless some degree of robustness at the boundary of $\mathcal{I}$.
\end{remark}

With the goal of applying Fact~\ref{fact:psatz}, \eqref{eq:inclusion} can be cast as
\begin{equation*}
	\begin{split}
		\Big\{ x \in \mathbb{R}^n :\ & h(x) = 0, \\
		& \frac{\partial h}{\partial x}(x)\dot{x} + \epsilon \geq 0,\ \frac{\partial h}{\partial x}(x)\dot{x} + \epsilon \neq 0 \Big\} = \emptyset.
	\end{split}
\end{equation*}
This empty set condition is equivalent, by Fact~\ref{fact:psatz}, to the existence of $\ell_1 \in \Pi$, $s_0,s_1 \in \Sigma$ and $k_1\in \mathbb{Z}_{\ge 0}$ such that
\begin{equation}\label{eq:invsos}
	\ell_1 h + s_0 + s_1\Big(\frac{\partial h}{\partial x}\dot{x} + \epsilon\Big) + \Big(\frac{\partial h}{\partial x}\dot{x} + \epsilon\Big)^{2k_1} = 0.
\end{equation}
With the final goal of implementing numerically this condition, we simplify \eqref{eq:invsos} by setting $s_0 = 0$, $k_1=1$ and $\ell_1 = \ell \big( \frac{\partial h}{\partial x}\dot{x} + \epsilon\big)$ and, for $\ell \in \Pi$ and $s_1 \in \Sigma$, obtain
\begin{equation*}
	\Big(\frac{\partial h}{\partial x}\dot{x} + \epsilon\Big) \Big[\ell h + s_1 + \Big(\frac{\partial h}{\partial x}\dot{x} + \epsilon\Big)  \Big] = 0,
\end{equation*}
which is a relaxation of \eqref{eq:invsos}.
So, the original \eqref{eq:inclusion} is implied by
\begin{equation}\label{invCond}
	\ell(x) h(x) + \frac{\partial h}{\partial x}(x)\dot{x} + \epsilon = -s_1(x) \leq 0 \qquad \forall x.
\end{equation}
The arguments above can be summarized as follows.
\begin{lemma}
	If there exists $\ell \in \Pi$ such that condition \eqref{invCond} holds, then $\mathcal{I}$ in \eqref{eq:invset} is an invariant set for \eqref{polynSysDataGen}.
\end{lemma}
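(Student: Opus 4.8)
The plan is to prove the lemma in two stages. The preceding derivation has already shown that condition \eqref{invCond} implies the set inclusion \eqref{eq:inclusion}; for completeness I would first make this algebraic implication explicit, and then carry out the genuinely dynamical part, namely that \eqref{eq:inclusion} forces the sublevel set $\mathcal{I}$ to be invariant. Only this second step is the real substance of the lemma, since the first is a one-line specialization.

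For the algebraic step I would restrict \eqref{invCond} to the boundary $\{x \in \real^n : h(x) = 0\}$. There the term $\ell(x)h(x)$ vanishes, and since $s_1 \in \Sigma$ is an SOS polynomial and hence $s_1(x) \ge 0$ everywhere, \eqref{invCond} reduces to $\hJac \dot{x} + \epsilon \le 0$, i.e. $\hJac \dot{x} \le -\epsilon$ on $\{h = 0\}$, where $\dot{x}$ is evaluated along the nominal closed-loop field in \eqref{polynSysClosed}. This is precisely the inclusion \eqref{eq:inclusion}, carrying the strict margin $-\epsilon < 0$ that is exploited next.

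For the dynamical step I would argue by a Nagumo-type boundary-crossing contradiction. Fix $x_0 \in \mathcal{I}$ and let $t \mapsto \alpha(t,x_0)$ be the maximal solution of the closed-loop system \eqref{polynSysClosed} on $[0,T(x_0))$ from Definition~\ref{def:invariant}; set $\phi(t) := h(\alpha(t,x_0))$, which is $C^1$ because $h$ is polynomial and $\alpha(\cdot,x_0)$ is a $C^1$ solution of a polynomial ODE, with $\dot{\phi}(t) = \frac{\partial h}{\partial x}(\alpha(t,x_0))\,\dot{\alpha}(t,x_0)$. We have $\phi(0) = h(x_0) \le 0$. Suppose, for contradiction, that $\phi(t^\ast) > 0$ for some $t^\ast \in [0,T(x_0))$, and let $t_1 := \sup\{ t \in [0,t^\ast] : \phi(t) \le 0 \}$. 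By continuity $\phi(t_1) = 0$ and $\phi(t) > 0$ for all $t \in (t_1,t^\ast]$, so the right difference quotients give $\dot{\phi}(t_1) \ge 0$. But $\alpha(t_1,x_0)$ lies on $\{h = 0\}$, so the boundary bound just derived yields $\dot{\phi}(t_1) \le -\epsilon < 0$, a contradiction. Hence $\phi(t) \le 0$, i.e. $\alpha(t,x_0) \in \mathcal{I}$, for all $t \in [0,T(x_0))$, which is exactly invariance of $\mathcal{I}$ in the sense of Definition~\ref{def:invariant}.

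The main obstacle is getting the boundary-crossing argument right rather than the algebra: one must isolate the \emph{last} time $t_1$ at which $\phi$ returns to zero before turning positive, and handle the borderline case $h(x_0) = 0$ uniformly. Here the strict margin $\epsilon > 0$ supplied by \eqref{eq:inclusion} is what makes the argument clean — with only $\hJac \dot{x} \le 0$ on the boundary one would face the more delicate subtangentiality conditions of the general Nagumo theorem, whereas strict negativity rules out any crossing outright. I would also stress that the claim is asserted only over the maximal existence interval $[0,T(x_0))$, so no forward-completeness of solutions needs to be assumed.
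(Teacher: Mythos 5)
Your proof is correct and follows essentially the same route as the paper: restrict \eqref{invCond} to the boundary $\{x : h(x)=0\}$, where $\ell(x)h(x)$ vanishes and $s_1(x)\ge 0$, to recover $\hJacS\,\dot{x} \le -\epsilon$, i.e., the inclusion \eqref{eq:inclusion}, which the paper's own justification also reduces to. The only differences are in your favor: the paper leaves the Nagumo-type boundary-crossing step implicit while you carry it out correctly (last-exit time $t_1$, right difference quotient, contradiction with the strict margin $-\epsilon$), and your argument makes explicit that the invariance being established is for the nominal closed loop \eqref{polynSysClosed} rather than the disturbed data-generation dynamics \eqref{polynSysDataGen} cited in the lemma statement.
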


Indeed, for all $x$ such that $h(x) =0$ (corresponding to the boundary of $\mathcal{I}$), the condition imposes $\frac{\partial h}{\partial x}(x)\dot{x} \le -\epsilon$, i.e., the Lie derivative of $h$ is strictly negative. 
For all $x$ such that $h(x) <0$ (corresponding to the interior of $\mathcal{I}$), the condition imposes $\frac{\partial h}{\partial x}(x)\dot{x} \le -\epsilon  - \ell(x)h(x)$. 
Since $\ell$ is a polynomial without any sign definiteness requirement and $\epsilon$ is a design parameter selected as a small positive number, the term $-\epsilon - \ell(x) h(x)$ does not need to be negative and can actually be positive; hence, the condition may allow even for a positive $\frac{\partial h}{\partial x}(x)\dot{x}$, consistently with set invariance being less restrictive than attractivity.

\section{Data-driven safe control}
\label{sec:data_driven}
Our goal is to formulate a condition depending exclusively on noisy data to find an invariant set for the actual system \eqref{polynSysClosed}. We substitute in~\eqref{invCond} the closed-loop dynamics in~\eqref{polynSysClosed} and obtain
\begin{align*}
	\ell(x) h(x)  + \epsilon  + \hJac \bmat{\sa{A} & \sa{B}} \bmat{Z(x)\\ W(x) K(x)} \le 0  \quad \forall x. 
\end{align*}
Since the model is not available and the true coefficient matrices $\sa{A}$ and $\sa{B}$ are unknown, we rather enforce the previous inequality on all matrices $(A,B)$ that are consistent with data (for a given disturbance model), as we now characterize.
By \textit{consistent with data}, we mean all matrices $(A,B)$ that could have produced the measured data sequences $\{ \dot{x}^j, z^j, v^j\}_{j=0}^{T-1}$ as in~\eqref{data instant} for an additive disturbance $d$ that is bounded.
A realistic bound on disturbance is that the norm of any possible disturbance instance $d$ is upper-bounded, and so are the norms of the unknown $d^0$, \dots, $d^{T-1}$. This corresponds to an \emph{instantaneous} bound given, for $\omega \geq 0$, by the set
\begin{equation}
	\mathcal{D}_{\tu{i}} := \{d \in \mathbb{R}^n : |d|^2 \leq \omega\}. \label{in:bound}
\end{equation}
Then, based on the bound in $\mathcal{D}_{\tu{i}}$, the matrices $(A,B)$ consistent with a single data point $j \in \{0,\dots,T-1\}$ belong to the set
\begin{align}
	\mathcal{C}^j_{\tu{i}} := &  \Big\{
	[A\ B] \colon   \dot{x}^j = A z^j + B v^j + d,\,
	d d^\top \preceq \omega I
	\Big\}, \label{Ci set}\\
	= &  \Big\{ [A\ B] \colon 
	\bmat{I & A & B}
	\bmat{I & \dot{x}^j\\ 0 & -z^j \\ 0 & -v^j}  \cdoT 
	\bmat{\omega I & 0\\ 0 & -I}
	[\star]^\top \! \succeq 0
	\Big\}. \notag
\end{align}
$\mathcal{C}^j_{\tu{i}}$ is the set of all matrices for which some disturbance $d$ satisfying the bound in~\eqref{in:bound} could have generated the measured data point $j$, as in~\cite{milanese2004set,dai2018moments}. 
The set of matrices $(A,B)$ consistent with all data points $j=0, \dots, T-1$ is then
\begin{equation}
	\label{inter Ci set}
	\mathcal{C}_{\tu{i}}:=\bigcap_{j=0}^{T-1} \mathcal{C}^j_{\tu{i}}.
\end{equation}%
We discussed in \cite{bisoffi2021tradeoffs} that, for matrices $\zeta_0$, $P=P^\top \succ 0$, $Q=Q^\top \succeq 0$, the set $\mathcal{C}_{\tu{i}}$ cannot be expressed as a matrix ellipsoid\footnote{As the name suggests, a matrix ellipsoid is an extension of the classical (vector) ellipsoid $\{\zeta \in \mathbb{R}^p \colon (\zeta-\zeta_0)^\top P^{-2} (\zeta-\zeta_0 ) \le q \}$ with $P = P^\top \succ 0$ and $q \ge 0$, see \cite{bisoffi2021tradeoffs} for details.} of the form	
\begin{equation}
	\{\zeta \in \mathbb{R}^{(n+m) \times n} \colon (\zeta-\zeta_0)^\top P^{-2} (\zeta-\zeta_0 ) \preceq Q \},
\end{equation}	
which is instrumental to obtain our main result.
On the other hand, this form can be obtained for the matrix ellipsoid $\overline{\mathcal{C}}_{\tu{i}}$ defined for $A_{\tu{i}} \succ 0$ as 
\begin{align}
	& \overline{\mathcal{C}}_{\tu{i}} := \{ [A\ B]= \zeta^\top \colon  \notag \\
	& \hspace*{10mm}		\bmat{I & \zeta^\top}
	\bmat{B_{\tu{i}}^\top A_{\tu{i}}^{-1} B_{\tu{i}} - I~ & B_{\tu{i}}^\top\\
		B_{\tu{i}}~ & A_{\tu{i}}}
	\bmat{I\\ \zeta} \preceq 0 \} .\label{setOverlineI}
\end{align}
Indeed, for $A_{\tu{i}} \succ 0$, the condition in~\eqref{setOverlineI} is precisely $(\zeta + A_{\tu{i}}^{-1}B_{\tu{i}})^\top A_{\tu{i}} (\zeta + A_{\tu{i}}^{-1}B_{\tu{i}} ) \preceq I$, so that the selections
\begin{equation}
	\label{ell matr equiv}
	\overline{\zeta}_{\tu{i}} := - A_{\tu{i}}^{-1}B_{\tu{i}},\quad \overline{P}_{\tu{i}}^{-2} := A_{\tu{i}},\quad \overline{Q}_{\tu{i}}:=I
\end{equation}
rewrite $\overline{\mathcal{C}}_{\tu{i}}$ equivalently as
\begin{align}
	\overline{\mathcal{C}}_{\tu{i}} & =
	\{[A\ B]=\zeta^\top: (\zeta-\overline{\zeta}_{\tu{i}})^\top \overline{P}_{\tu{i}}^{-2} (\zeta-\overline{\zeta}_{\tu{i}}) \preceq \overline{Q}_{\tu{i}} \}. \label{setOverlineI-2}
\end{align}
In summary, because the form \eqref{setOverlineI-2} allows us to obtain our main result, we over-approximate $\mathcal{C}_{\tu{i}}$ in~\eqref{inter Ci set} through $\overline{\mathcal{C}}_{\tu{i}}$ in~\eqref{setOverlineI} where the matrices $A_{\tu{i}} \succ 0$ and $B_{\tu{i}}$ are determined by solving an optimization program, which we recall from~\cite[\S~5.1]{bisoffi2021tradeoffs}.

From data, define for $j = 0, \dots, T-1$
\begin{align} 
	& c_j :=-\omega I + \dot{x}^j  (\dot{x}^{j})^\top, \notag \\
	& b_j := -\bmat{z^j\\ v^j} (\dot{x}^{j})^\top,a_j := \bmat{z^j\\v^j}\bmat{z^j\\v^j}^\top.	\label{params of set inst bound single point}
\end{align}
As it is done for classical ellipsoids \cite[\S3.7.2]{boyd1994linear}, we impose that the matrix ellipsoid $\overline{\mathcal{C}}_{\tu{i}}$, which is well-defined for $A_{\tu{i}} \succ 0$, includes $\mathcal{C}_{\tu{i}}$ through the (lossy) S-procedure \cite[\S 2.6.3]{boyd1994linear} and we then minimize the size of $\overline{\mathcal{C}}_{\tu{i}}$. This corresponds to the optimization program
\begin{equation}\label{overapprox set I}
	\begin{split}
		& \text{minimize}\  -\log\det A_{\tu{i}} \\
		&\text{subject to} \\
		&\  \bmat{
			\vspace*{1pt}
			- I - \sum_{j =0}^{T-1} \tau_j c_j   &
			B_{\tu{i}}^\top - \sum_{j =0}^{T-1} \tau_j b_j^\top &
			B_{\tu{i}}^\top\\
			\vspace*{1pt}
			B_{\tu{i}} - \sum_{j =0}^{T-1} \tau_j b_j &
			A_{\tu{i}} - \sum_{j =0}^{T-1} \tau_j a_j &
			0\\
			B_{\tu{i}} &
			0 &
			- A_{\tu{i}}
		} \preceq 0, \\
		&\ A_{\tu{i}} \succ 0,\  \tau_j\ge 0 \text{ for } j = 0, \dots, T-1.
	\end{split}
\end{equation}
When this optimization program is solved, we use the returned $A_{\tu{i}}$ and $B_{\tu{i}}$ to obtain the matrices $\overline{\zeta}_{\tu{i}}$, $\overline{P}_{\tu{i}}$ $\overline{Q}_{\tu{i}}$ as in~\eqref{ell matr equiv}.
Before further analyzing the optimization program, we discuss in the next remark an alternative bound on the disturbance that is commonly used.

\begin{remark}
	\label{remark:energy bound}
	When an instantaneous bound on the disturbance is given by $|d|^2 \le \omega$ as in $\mathcal{D}_{\tu{i}}$ in~\eqref{in:bound}, one can infer that the whole unknown disturbance sequence $\bmat{d^0 & \dots  & d^{T-1}}$ of the experiment belongs to the set
	\begin{equation}
		\label{en:bound}
		\mathcal{D}_{\tu{e}} := \{D \in \mathbb{R}^{n\times T} : DD^{\top}  \preceq T \omega I \},
	\end{equation}
	which we call an \emph{energy} bound on the disturbance. By collecting data points in~\eqref{data instant} in matrices
	\begin{subequations}
		\label{data}
		\begin{align}
			X_1 & := \bmat{\dot{x}^0 & \dots  & \dot{x}^{T-1}}, \label{data:matX1}\\
			Z_0 & := \bmat{z^0 & \dots  & z^{T-1} } , V_0 := \bmat{v^0 & \dots & v^{T-1}},\label{data:matU}
		\end{align}
		the set of matrices consistent with data is
	\end{subequations}	
	\begin{equation}
		\label{Ce definition}
		\begin{split}
			\! & \mathcal{C}_{\tu{e}} := \Big\{ [A\ B] : X_1 = AZ_0 + B V_0 + D, \\ 
			& \hspace*{25mm} D \in \mathbb{R}^{n\times T}, DD^{\top}  \preceq T \omega I \Big\},
		\end{split}	
	\end{equation}		
	or, after standard manipulations as in \cite[Section 2.3]{bisoffi2021petersen},
	\begin{align}
		& \mathcal{C}_{\tu{e}} = \Big\{ [A\ B] = \zeta^\top \colon 
		\bmat{I & \zeta^\top}
		\bmat{C_{\tu{e}} & B_{\tu{e}}^{\top}\\B_{\tu{e}} & A_{\tu{e}}}
		\bmat{I \\ \zeta} \preceq 0 \} \notag \\
		& A_{\tu{e}} := \bmat{Z_0 \\ V_0} \bmat{Z_0 \\ V_0}^\top,\
		B_{\tu{e}}  := -\bmat{Z_0 \\ V_0} X_1^{\top}, \notag \\
		& C_{\tu{e}}  := - T \omega I + X_1X_1^{\top}.     \label{Ae Be Ce}
	\end{align}
	This form is mathematically analogous to that of $\overline{\mathcal{C}}_{\tu{i}}$ in~\eqref{setOverlineI} and one can indeed adapt the results we will find for $\overline{\mathcal{C}}_{\tu{i}}$ to the set $\mathcal{C}_{\tu{e}}$, as we will show in Remark~\ref{remark:energy bound cntd}.
	However, \cite{bisoffi2021tradeoffs} illustrates that, unless $T$ is very large and thus impacts the computational cost, it is advantageous to work with $\overline{\mathcal{C}}_{\tu{i}}$ instead of $\mathcal{C}_{\tu{e}}$.
\end{remark}

The results that follow rely on the optimization program in~\eqref{overapprox set I} being feasible.
We would like to show that feasibility of~\eqref{overapprox set I} is guaranteed under a relatively mild assumption.
This assumption is that the data set $\{\dot{x}^j, z^j, v^j\}_{j=0}^{T-1}$ yields a matrix $\smat{Z_0 \\ V_0}:=\smat{z^0 & \dots  & z^{T-1}\\v^0 & \dots & v^{T-1}}$ with full row rank.
This rank condition can be easily checked from data and when not verified, one can always collect additional data points, thereby adding columns to $\smat{Z_0 \\ V_0}$, to try and meet the rank condition (in a linear setting, this rank condition is related to classical persistence of excitation, see \cite[Section~4.1]{bisoffi2021petersen}).
We have then the next result for feasibility of \eqref{overapprox set I}.

\begin{lemma}
	\label{lemma:pers of exc implies opt feas}
	If $\smat{Z_0 \\ V_0}$ has full row rank, then the optimization program \eqref{overapprox set I} is feasible.
\end{lemma}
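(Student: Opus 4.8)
The plan is to exhibit an explicit feasible point of \eqref{overapprox set I}: since feasibility only requires the constraint set to be nonempty, the objective is irrelevant and I only need a triple $(A_{\tu{i}}, B_{\tu{i}}, \{\tau_j\})$ meeting the constraints. The key observation is that taking all multipliers equal, $\tau_j = \tau$ for a single scalar $\tau > 0$ and all $j$, collapses the data-dependent sums in \eqref{overapprox set I} into the energy-bound matrices of Remark~\ref{remark:energy bound}. Indeed, from the definitions in \eqref{params of set inst bound single point} and the data matrices in \eqref{data} one checks directly that $\sum_{j} c_j = C_{\tu{e}}$, $\sum_{j} b_j = B_{\tu{e}}$ and $\sum_j a_j = A_{\tu{e}}$, with $A_{\tu{e}}, B_{\tu{e}}, C_{\tu{e}}$ as in \eqref{Ae Be Ce}; hence $\sum_j \tau_j c_j = \tau C_{\tu{e}}$, and analogously for $b_j$ and $a_j$.

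With this choice I would propose the candidate $A_{\tu{i}} := \tau A_{\tu{e}}$ and $B_{\tu{i}} := \tau B_{\tu{e}}$. Because $\smat{Z_0 \\ V_0}$ has full row rank, $A_{\tu{e}} = \smat{Z_0 \\ V_0}\smat{Z_0 \\ V_0}^\top \succ 0$, so $A_{\tu{i}} = \tau A_{\tu{e}} \succ 0$ as required and $A_{\tu{e}}$ is invertible. Substituting the candidate, the $(2,2)$ block $A_{\tu{i}} - \tau A_{\tu{e}}$ and the $(2,1)$ block $B_{\tu{i}} - \tau B_{\tu{e}}$ both vanish; deleting the resulting zero row and column, the LMI of \eqref{overapprox set I} reduces to
\begin{equation*}
	\bmat{-I - \tau C_{\tu{e}} & \tau B_{\tu{e}}^\top \\ \tau B_{\tu{e}} & -\tau A_{\tu{e}}} \preceq 0 .
\end{equation*}
Taking the Schur complement with respect to $-\tau A_{\tu{e}} \prec 0$, this is equivalent to the single inequality $\tau \big( B_{\tu{e}}^\top A_{\tu{e}}^{-1} B_{\tu{e}} - C_{\tu{e}} \big) \preceq I$.

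It then remains to bound $B_{\tu{e}}^\top A_{\tu{e}}^{-1}B_{\tu{e}} - C_{\tu{e}}$, which is the crux of the argument. Writing $\Pi := \smat{Z_0 \\ V_0}^\top A_{\tu{e}}^{-1} \smat{Z_0 \\ V_0}$ for the orthogonal projector onto the row space of $\smat{Z_0 \\ V_0}$ (well-defined precisely by the rank assumption, so $\Pi = \Pi^\top$, $\Pi^2 = \Pi$ and $0 \preceq \Pi \preceq I$), a direct substitution of \eqref{Ae Be Ce} gives $B_{\tu{e}}^\top A_{\tu{e}}^{-1} B_{\tu{e}} = X_1 \Pi X_1^\top$ and hence $B_{\tu{e}}^\top A_{\tu{e}}^{-1}B_{\tu{e}} - C_{\tu{e}} = T\omega I - X_1 (I - \Pi) X_1^\top \preceq T\omega I$, where the inequality uses $I - \Pi \succeq 0$. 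Consequently $\tau\big( B_{\tu{e}}^\top A_{\tu{e}}^{-1}B_{\tu{e}} - C_{\tu{e}} \big) \preceq \tau T\omega I \preceq I$ as soon as $\tau > 0$ is small enough that $\tau T\omega \le 1$ (any $\tau > 0$ works if $\omega = 0$). This furnishes a feasible triple and proves the claim.

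I expect the main obstacle to be the projector identity $B_{\tu{e}}^\top A_{\tu{e}}^{-1}B_{\tu{e}} - C_{\tu{e}} = T\omega I - X_1(I-\Pi)X_1^\top$: one must recognize that equalizing the multipliers is the correct reduction, and that the resulting uniform upper bound $T\omega I$ is exactly what lets a rescaling by a small $\tau$ satisfy the normalized ($\overline{Q}_{\tu{i}} = I$) ellipsoid constraint of \eqref{setOverlineI}. The full row rank hypothesis enters only here and in guaranteeing $A_{\tu{i}} \succ 0$, which explains why it is the natural assumption for feasibility.
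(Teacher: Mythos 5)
Your proposal is correct and follows essentially the same route as the paper's proof: equalize the multipliers $\tau_j = \tau$, take $A_{\tu{i}} = \tau A_{\tu{e}}$, $B_{\tu{i}} = \tau B_{\tu{e}}$, collapse the sums into the energy-bound matrices of \eqref{Ae Be Ce}, and reduce via Schur complement to $\tau\,(B_{\tu{e}}^\top A_{\tu{e}}^{-1}B_{\tu{e}} - C_{\tu{e}}) \preceq I$, which holds for sufficiently small $\tau>0$. The only difference is the last step, where the paper cites an external lemma for $B_{\tu{e}}^\top A_{\tu{e}}^{-1}B_{\tu{e}} - C_{\tu{e}} \succeq 0$ and asserts existence of a small enough $\tau$, whereas you derive the explicit and correct projector bound $B_{\tu{e}}^\top A_{\tu{e}}^{-1}B_{\tu{e}} - C_{\tu{e}} \preceq T\omega I$, yielding the concrete choice $\tau \le 1/(T\omega)$ --- a self-contained and slightly more quantitative finish.
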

\begin{proof}
With the matrices $A_{\tu{e}}$ and $B_{\tu{e}}$ defined in~\eqref{Ae Be Ce} from data, set in~\eqref{overapprox set I}
	\begin{equation}
		\label{selection for feasib}
		\tau_0 = \dots = \tau_{T-1} = \tau, A_{\tu{i}} = \tau A_{\tu{e}}, B_{\tu{i}} = \tau B_{\tu{e}}
	\end{equation}
	for $\tau > 0$ to be determined in the proof.
	Since $\smat{Z_0 \\ V_0}$ has full row rank, $A_{\tu{e}}$ in~\eqref{Ae Be Ce} satisfies $A_{\tu{e}} \succ 0$.
	Consider the constraints in~\eqref{overapprox set I}; by \eqref{selection for feasib}, $A_{\tu{e}} \succ 0$ and the selection $\tau > 0$, the statement of the lemma is proven if we choose $\tau > 0$ suitably to verify
	\[
	\smat{
		- I -  \sum_{j =0}^{T-1} \tau c_j   &
		\tau B_{\tu{e}}^\top - \sum_{j =0}^{T-1} \tau b_j^\top & \tau B_{\tu{e}}^\top\\
		\tau B_{\tu{e}} - \sum_{j =0}^{T-1} \tau b_j & \tau A_{\tu{e}} - \sum_{j =0}^{T-1}  \tau a_j & 0\\
		\tau B_{\tu{e}} & 0 & - \tau A_{\tu{e}}
	} \preceq 0.
	\]
	By recalling \eqref{params of set inst bound single point} and \eqref{Ae Be Ce}, we have the relations $\sum_{j =0}^{T-1} c_j = -T \omega I + X_1 X_1^\top = C_{\tu{e}}$, $\sum_{j =0}^{T-1}  b_j= - \smat{Z_0 \\ V_0} X_1^\top =  B_{\tu{e}}$, $\sum_{j =0}^{T-1} a_j = \smat{Z_0 \\ V_0} \smat{Z_0 \\ V_0}^\top =  A_{\tu{e}}$. 
	By substituting these relations in the previous matrix inequality, we want to choose $\tau > 0$ suitably to verify
	\begin{equation*}
		\smat{
			- I - \tau  C_{\tu{e}}  & 0 & \tau B_{\tu{e}}^\top\\
			0 & 0 & 0\\
			\tau B_{\tu{e}} & 0 & - \tau A_{\tu{e}}
		} \preceq 0 \iff
		\smat{
			- I - \tau  C_{\tu{e}} + \tau B_{\tu{e}}^\top A_{\tu{e}}^{-1} B_{\tu{e}} & 0 \\
			0 & 0 } \preceq 0
	\end{equation*}
	by Schur complement and $\tau A_{\tu{e}} \succ 0$.
	The last condition is indeed true and the statement is thus proven because $\overline{Q}_{\tu{e}} := B_{\tu{e}}^\top A_{\tu{e}}^{-1} B_{\tu{e}} - C_{\tu{e}} \succeq 0$ by \cite[Lemma~1]{bisoffi2021petersen} and a sufficiently small $\tau > 0$ ensures $- I + \tau  \overline{Q}_{\tu{e}} \preceq 0$.
\end{proof}

With the set $\mathcal{C}_{\tu{i}}$ of matrices consistent with data in~\eqref{inter Ci set} and its over-approximation $\overline{\mathcal{C}}_{\tu{i}}$ obtained via the optimization program~\eqref{overapprox set I}, we can solve the considered problem of enforcing invariance for ground truth matrices $(\sa{A},\sa{B})$.
This is achieved by enforcing invariance in~\eqref{invCond} for all matrices $(A,B)$ in $\overline{\mathcal{C}}_{\tu{i}}$ as in the next main result, so that an invariant set is determined directly from data.

\begin{theorem}[Data-driven invariance condition]\label{th:main}
	For a design parameter $\epsilon > 0$, consider the data generation mechanism \eqref{data gen mech}, measured data $\{ \dot{x}^j, z^j, v^j\}_{j=0}^{T-1}$ as in~\eqref{data instant} and disturbance $d$ satisfying the instantaneous bound $\mathcal{D}_{\tu{i}}$ in~\eqref{in:bound} (i.e., $d^0 \in \mathcal{D}_{\tu{i}}$, \dots, $d^{T-1} \in \mathcal{D}_{\tu{i}}$). \newline
	Assume that the optimization program in~\eqref{overapprox set I} is feasible so that parameters $\overline{\zeta}_{\tu{i}}$, $\overline{P}_{\tu{i}}$ and $\overline{Q}_{\tu{i}}$ in~\eqref{ell matr equiv} exist.
	Assume that there exist decision variables $\ell \in \Pi $, $\eta \in \Pi $, $h \in \Pi$ and $K \in \Pi_{m,1}$ such that for all $x \in \mathbb{R}^n$, $\eta(x) > 0$ and $H(x) \preceq 0$, with $H$ defined as
\begingroup%
\thinmuskip=.5mu plus 1mu minus 1mu%
\medmuskip=1.0mu plus 1mu minus 1mu%
\thickmuskip=1.5mu plus 1mu minus 1mu%
\setlength\arraycolsep{1.5pt}%
\begin{align}
		& H(x):=\!\bmat{\left\{ \begin{matrix}
				\ell(x)h(x) + \epsilon \hfill \\
				\!+ \hJac \overline{\zeta}_{\tu{i}}^\top\! \smat{Z(x)\\ W(x) K(x)}\!
			\end{matrix} \right\}  & \star & \star \\
			\eta(x) \overline{P}_{\tu{i}} \smat{Z(x)\\ W(x) K(x)} & -2 \eta(x) I & \star \\
			\overline{Q}_{\tu{i}}^{1/2} \hJac^\top & 0 & -2 \eta(x) I}.
		\label{qlinRobInvCond3}
\end{align}%
\endgroup%
Then, the set $\mathcal{I}$ in~\eqref{eq:invset} is invariant for the system in~\eqref{polynSysClosed}.
\end{theorem}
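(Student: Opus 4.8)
The plan is to show that the matrix inequality $H(x)\preceq 0$ together with $\eta(x)>0$ forces the scalar invariance condition \eqref{invCond} to hold along the true closed loop \eqref{polynSysClosed}, after which invariance of $\mathcal{I}$ follows from the earlier Lemma stating that \eqref{invCond} renders $\mathcal{I}$ invariant. First I would locate the ground-truth matrices inside the over-approximating ellipsoid: by the data generation mechanism \eqref{data gen mech} with $d^0,\dots,d^{T-1}\in\mathcal{D}_{\tu{i}}$, each sample is explained by $(\sa{A},\sa{B})$ and an admissible disturbance, so $[\sa{A}\ \sa{B}]\in\mathcal{C}^j_{\tu{i}}$ in \eqref{Ci set} for every $j$ and hence $[\sa{A}\ \sa{B}]\in\mathcal{C}_{\tu{i}}$ in \eqref{inter Ci set}. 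Since feasibility of \eqref{overapprox set I} makes $\overline{\mathcal{C}}_{\tu{i}}$ contain $\mathcal{C}_{\tu{i}}$ by construction, writing $[\sa{A}\ \sa{B}]=\zeta^\top$ gives $\zeta\in\overline{\mathcal{C}}_{\tu{i}}$, i.e., by \eqref{setOverlineI-2}, $(\zeta-\overline{\zeta}_{\tu{i}})^\top\overline{P}_{\tu{i}}^{-2}(\zeta-\overline{\zeta}_{\tu{i}})\preceq\overline{Q}_{\tu{i}}$. It is therefore enough to certify \eqref{invCond} for \emph{every} $\zeta$ in this ellipsoid; this in particular covers the unknown true $\zeta$.

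Next I would fix such a $\zeta$, substitute $\dot{x}=\zeta^\top\smat{Z(x)\\ W(x)K(x)}$ into \eqref{invCond}, and split $\zeta^\top=\overline{\zeta}_{\tu{i}}^\top+(\zeta-\overline{\zeta}_{\tu{i}})^\top$. This isolates a \emph{nominal} contribution $\hJac\overline{\zeta}_{\tu{i}}^\top\smat{Z\\ WK}$, which is exactly the uncertainty-free part of the $(1,1)$ entry of $H$, from an \emph{uncertain} bilinear term $\hJac(\zeta-\overline{\zeta}_{\tu{i}})^\top\smat{Z\\ WK}$. The core step is to bound the latter using the ellipsoid. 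Setting $F:=\overline{P}_{\tu{i}}^{-1}(\zeta-\overline{\zeta}_{\tu{i}})$, the ellipsoid constraint becomes $F^\top F\preceq\overline{Q}_{\tu{i}}$, and the uncertain term equals $(\overline{P}_{\tu{i}}\smat{Z\\ WK})^\top F\,\hJac^\top$. A Cauchy--Schwarz estimate together with $\|F\hJac^\top\|\le\sqrt{\hJac\overline{Q}_{\tu{i}}\hJac^\top}$ and Young's inequality with weight $\eta(x)>0$ then yields
\begin{equation*}
\hJac(\zeta-\overline{\zeta}_{\tu{i}})^\top\smat{Z\\ WK}\le \tfrac{\eta}{2}\smat{Z\\ WK}^\top\overline{P}_{\tu{i}}^2\smat{Z\\ WK}+\tfrac{1}{2\eta}\hJac\overline{Q}_{\tu{i}}\hJac^\top .
\end{equation*}

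Adding this bound to the nominal part shows that \eqref{invCond} is implied, for every admissible $\zeta$, by the single scalar inequality $\ell h+\epsilon+\hJac\overline{\zeta}_{\tu{i}}^\top\smat{Z\\ WK}+\tfrac{\eta}{2}\smat{Z\\ WK}^\top\overline{P}_{\tu{i}}^2\smat{Z\\ WK}+\tfrac{1}{2\eta}\hJac\overline{Q}_{\tu{i}}\hJac^\top\le 0$. The last step is to recognize this scalar inequality as precisely the Schur complement of $H(x)$ with respect to its lower-right block $\diag(-2\eta I,-2\eta I)$: since $\eta(x)>0$ makes that block negative definite, $H(x)\preceq 0$ is equivalent to the Schur-complement inequality, in which the $\overline{P}_{\tu{i}}$- and $\overline{Q}_{\tu{i}}^{1/2}$-weighted off-diagonal blocks reproduce exactly the two quadratic terms above. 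Hence $\eta(x)>0$ and $H(x)\preceq 0$ imply \eqref{invCond} for the true dynamics, and invariance of $\mathcal{I}$ follows. I expect the main obstacle to be the bound on the uncertain bilinear term: the uncertainty $\zeta-\overline{\zeta}_{\tu{i}}$ is matrix-valued rather than scalar, so one must pass through the factorization $F=\overline{P}_{\tu{i}}^{-1}(\zeta-\overline{\zeta}_{\tu{i}})$ and verify that the Young/Cauchy--Schwarz estimate matches the Schur complement term by term, with the weights $\tfrac{\eta}{2}$ and $\tfrac{1}{2\eta}$ placed correctly.
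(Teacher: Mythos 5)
Your proposal is correct and follows essentially the same route as the paper's proof: place $[\sa{A}\ \sa{B}]$ in $\overline{\mathcal{C}}_{\tu{i}}$, split off the nominal part $\overline{\zeta}_{\tu{i}}$, bound the uncertain bilinear term via the ellipsoid factorization and Young's inequality with weight $\eta$, and recover $H(x)\preceq 0$ by Schur complement. Your substitution $F=\overline{P}_{\tu{i}}^{-1}(\zeta-\overline{\zeta}_{\tu{i}})$ with $F^\top F\preceq\overline{Q}_{\tu{i}}$ is just a reparametrization of the paper's $Y=\overline{P}_{\tu{i}}^{-1}(\zeta-\overline{\zeta}_{\tu{i}})\overline{Q}_{\tu{i}}^{-1/2}$ with $Y^\top Y\preceq I$, so the two arguments coincide step for step.
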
%

\begin{proof}
	Under the assumption that \eqref{overapprox set I} is feasible, $A_{\tu{i}} \succ 0$ by construction, so $\overline{P}_{\tu{i}}$ and $\overline{Q}_{\tu{i}}$ in~\eqref{ell matr equiv} satisfy $\overline{P}_{\tu{i}}\succ 0$ and $\overline{Q}_{\tu{i}}\succ 0$.
	These two relations allow us to rewrite \eqref{setOverlineI-2} as
	\begin{align}
		\hspace*{-2mm}\overline{\mathcal{C}}_{\tu{i}} &  \!= \{ \zeta^\top \!\! \in \real^{n \times (n+m)} \!
		\colon \! \overline{Q}_{\tu{i}}^{-1/2} (\zeta-\overline{\zeta}_{\tu{i}} )^\top \overline{P}_{\tu{i}}^{-1} \cdoT [\star]^\top \! \preceq \! I \}\! \label{setOverlineI-3} \\
		& \! = \{ (\overline{\zeta}_{\tu{i}} + \overline{P}_{\tu{i}} Y \overline{Q_{\tu{i}}}^{1/2})^\top\! \colon \! Y 
		\in \real^{(n+m) \times n}, Y^\top Y \preceq \! I \} \label{ell matr P Zc Q matrix norm}
	\end{align}
	where \eqref{setOverlineI-3} is obtained from~\eqref{setOverlineI-2} since $\overline{Q}_{\tu{i}}^{-1/2}\succ 0$ and \eqref{ell matr P Zc Q matrix norm} is obtained by setting $Y = \overline{P}_{\tu{i}}^{-1}(\zeta-\overline{\zeta}_{\tu{i}})\overline{Q}_{\tu{i}}^{-1/2}$ in~\eqref{setOverlineI-3}.
	Since the disturbance $d$ satisfies the instantaneous bound $\mathcal{D}_{\tu{i}}$ in~\eqref{in:bound}, $(\sa{A},\sa{B}) \in \bigcap_{j=0}^{T-1} \mathcal{C}^j_{\tu{i}} = \mathcal{C}_{\tu{i}}$ and, thus, $(\sa{A},\sa{B}) \in \overline{\mathcal{C}}_{\tu{i}}$ in \eqref{ell matr P Zc Q matrix norm} since \eqref{overapprox set I} enforces $\mathcal{C}_{\tu{i}} \subseteq \overline{\mathcal{C}}_{\tu{i}}$ by construction.
	The invariance condition in~\eqref{invCond} imposed for all matrices in $\overline{\mathcal{C}}_{\tu{i}}$ reads
	\begin{align}
		\label{inv cond for all data consist matr}
		\!\!\!\!\forall [A\ B] \! \in \overline{\mathcal{C}}_{\tu{i}},\forall x \in \real^n,
		\left\{
		\begin{matrix}
			\ell(x) h(x) + \frac{\partial h}{\partial x}(x)\dot{x} + \epsilon \! \leq 0 \hfill \\
			\dot{x}= \! A Z(x) + B W(x) K(x). 
		\end{matrix}
		\right.\!
	\end{align}
	Since $[\sa{A}\ \sa{B}] \in \overline{\mathcal{C}}_{\tu{i}}$, if condition \eqref{inv cond for all data consist matr} holds, then the set $\mathcal{I}$ in~\eqref{eq:invset} is invariant for the ground truth system in~\eqref{polynSysClosed}. 
	Therefore, the proof is complete if we verify \eqref{inv cond for all data consist matr}, i.e., if we verify that
	\begin{align*}
		& \forall [A\ B]  \in \overline{\mathcal{C}}_{\tu{i}},\forall x \in \real^n, \notag \\
		& \ell(x) h(x) + \epsilon + \hJacS \smat{A & B} \smat{Z(x) \\ W(x) K(x)} \\
		&= \ell(x) h(x) + \epsilon  + \tfrac{1}{\sqrt 2} \smat{Z(x) \\ W(x) K(x)}^\top
		\smat{A^\top\\ B^\top} \tfrac{1}{\sqrt 2}  \hJacS^\top \notag \\
		& \hspace*{5mm} + \tfrac{1}{\sqrt 2} \hJacS \smat{A & B} \tfrac{1}{\sqrt 2} \smat{Z(x) \\ W(x) K(x)} \le 0.
	\end{align*}
	Rewrite this invariance condition in the compact form
	\begin{equation}
		\label{genConditionBeforePetersen}
		\begin{aligned}
			& \forall \zeta^\top \in \overline{\mathcal{C}}_{\tu{i}}, \forall x\in \real^n,\\
			&  W(x) + S (x) \zeta R(x) + R(x)^\top \zeta^\top S(x)^\top \le 0 
		\end{aligned}
	\end{equation}	
	after defining
	\begin{align*}
		& W(x) := \ell(x) h(x) + \epsilon \notag \\
		& S (x) :=  \tfrac{1}{\sqrt 2} \smat{Z(x) \\ W(x) K(x)}^\top,\ R(x) := \tfrac{1}{\sqrt 2} \hJacS^\top.
	\end{align*}
	\eqref{genConditionBeforePetersen}, and thus \eqref{inv cond for all data consist matr}, is equivalent, by \eqref{ell matr P Zc Q matrix norm}, to
	\begin{align}
			&\hspace*{-4mm} \forall Y \text{ with } Y^\top Y \preceq I, \, \forall x\in \real^n, \notag\\
			&\hspace*{-4mm} W(x) + S (x) (\overline{\zeta}_{\tu{i}} + \overline{P}_{\tu{i}} Y \overline{Q}_{\tu{i}}^{1/2}) R(x) \notag\\
			&\hspace*{-4mm} \hspace*{1mm} + R(x)^\top (\overline{\zeta}_{\tu{i}} + \overline{P}_{\tu{i}} Y \overline{Q}_{\tu{i}}^{1/2})^\top S(x)^\top \notag\\
			&\hspace*{-4mm} = W(x) +  S (x) \overline{\zeta}_{\tu{i}} R(x) \!+ \! R(x)^\top \overline{\zeta}_{\tu{i}}^\top S(x)^\top \notag\\
			&\hspace*{-4mm} \hspace*{1mm} + S (x) \overline{P}_{\tu{i}} Y \overline{Q}_{\tu{i}}^{1/2} R(x) \!+\! R(x)^\top \overline{Q}_{\tu{i}}^{1/2} Y^\top \overline{P}_{\tu{i}} S(x)^\top\!\! \le\! 0.\!\!\!\!\!\!\label{genConditionBeforePetersen2}
	\end{align}
	If there exist $\eta \in \Pi$ with $\eta(x) >0$ for all $x \in \real^n$, we have
	\begin{align*}
		& \forall Y \text{ with } Y^\top Y \preceq I,  \forall x \in \real^n, \\
		&S (x) \overline{P}_{\tu{i}} Y \overline{Q}_{\tu{i}}^{1/2} R(x)   + R(x)^\top \overline{Q}_{\tu{i}}^{1/2} Y^\top \overline{P}_{\tu{i}} S(x)^\top\\
		& \le  \eta(x) S(x) \overline{P}_{\tu{i}}\, \overline{P}_{\tu{i}} S(x)^\top+
		\tfrac{1}{\eta(x)} R(x)^\top \overline{Q}_{\tu{i}}^{1/2} Y^\top Y \overline{Q}_{\tu{i}}^{1/2} R(x)\\
		& \le  \eta(x) S(x) \overline{P}_{\tu{i}}^2 S(x)^\top+
		\tfrac{1}{\eta(x)} R(x)^\top \overline{Q}_{\tu{i}} R(x)
	\end{align*}
	where we use Young's inequality in the first upperbound and $Y^\top Y \preceq I$ in the second upperbound. 
	Using this last upperbound in~\eqref{genConditionBeforePetersen2}, we obtain that \eqref{genConditionBeforePetersen2} is implied by the existence of $\eta$, with $\eta(x)>0$ for all $x\in \real^n$, such that 
	\begin{align*}
		& \hspace*{0mm} \forall x\in \real^n, \, W(x) +  S (x) \overline{\zeta}_{\tu{i}} R(x) +  R(x)^\top \overline{\zeta}_{\tu{i}}^\top S(x)^\top  \\
		& \hspace*{12mm} + \eta(x) S (x) \overline{P}_{\tu{i}}^2 S (x)^\top + \tfrac{1}{\eta(x)}R(x)^\top \overline{Q}_{\tu{i}} R(x) \le 0 .  \notag 
	\end{align*} 
	Replacing the explicit expressions of $W(x), S(x), R(x)$ in this inequality, we conclude that the invariance condition \eqref{inv cond for all data consist matr} holds if for all $x \in \real^n$, $\eta(x)>0$ and
	\begingroup
	\thinmuskip=.6mu plus 1mu
	\medmuskip=1.2mu plus 1mu
	\thickmuskip=1.8mu plus 1mu
	\begin{align*}
		& 0 \ge \ell(x) h(x)  + \epsilon + \hJacS \overline{\zeta}_{\tu{i}}^\top \smat{Z(x)\\ W(x) K(x)} \notag \\
		& +  \tfrac{\eta(x)}{2} \smat{Z(x)\\ W(x) K(x)}^\top \overline{P}_{\tu{i}}^2 \smat{Z(x)\\ W(x) K(x)} + \tfrac{1}{2 \eta(x)} \hJacS \overline{Q}_{\tu{i}} \hJacS^\top. 
	\end{align*}
	\endgroup
	This last condition is equivalent to having for all $x \in \real^n$, $\eta(x) > 0$ and $ H(x) \preceq 0$, as one can easily verify applying Schur complement \cite[p.~28]{boyd1994linear} on $ H(x) \preceq 0$.
	Hence, \eqref{inv cond for all data consist matr} holds, as we needed to complete the proof.
\end{proof}

\begin{remark}
	\label{remark:energy bound cntd}
	In Remark~\ref{remark:energy bound}, we commented on the possibility of having an energy bound on the disturbance.
	In that case, the conclusion of Theorem~\ref{th:main} continues to hold if the hypothesis of Theorem~\ref{th:main} is slightly adapted as follows:
	\begin{quote}
		For a design parameters $\epsilon > 0$, consider the data generation mechanism \eqref{data gen mech}, measured data $\{ \dot{x}^j, z^j, v^j\}_{j=0}^{T-1}$ as in~\eqref{data instant} and disturbance $d$ satisfying the \emph{energy bound $\mathcal{D}_{\tu{e}}$ in~\eqref{en:bound}} (i.e., $\bmat{d^0 & \dots & d^{T-1}}  \in \mathcal{D}_{\tu{e}}$).\newline		
		\emph{Assume that $\smat{Z_0\\ V_0}$ has full row rank and parameters $\overline{\zeta}_{\tu{i}}$, $\overline{P}_{\tu{i}}$ and $\overline{Q}_{\tu{i}}$ are equal}, instead of~\eqref{ell matr equiv}, \emph{to respectively}
		\begin{equation}\label{en:conver redux}
			\overline{\zeta}_{\tu{e}}:=-A_{\tu{e}} ^{-1}B_{\tu{e}} ,
			\overline{P}_{\tu{e}}:=A_{\tu{e}}^{-1/2},
			\overline{Q}_{\tu{e}}:=B_{\tu{e}} ^{\top}A_{\tu{e}} ^{-1} B_{\tu{e}} - C_{\tu{e}}
		\end{equation}	
		for $A_{\tu{e}}$, $B_{\tu{e}}$, $C_{\tu{e}}$ in~\eqref{Ae Be Ce}. Assume that there exist decision variables $\ell \in \Pi $, $\eta \in \Pi $, $h \in \Pi$ and $K \in \Pi_{m,1}$ such that for all $x\in \mathbb{R}^n$, $\eta(x) > 0$ and $H(x) \preceq 0$, with $H$ defined in~\eqref{qlinRobInvCond3}, \emph{where $\overline{\zeta}_{\tu{i}}$, $\overline{P}_{\tu{i}}$, $\overline{Q}_{\tu{i}}$ are respectively equal to~\eqref{en:conver redux}}.\newline
		Then, the set $\mathcal{I}$ in~\eqref{eq:invset} is invariant for the system in~\eqref{polynSysClosed}.
	\end{quote}
	This adaptation of Theorem~\ref{th:main} is proven by observing that: (i)~$\overline{P}_{\tu{e}}$ and  $\overline{Q}_{\tu{e}}$, which are used in place of respectively $\overline{P}_{\tu{i}}$ and $\overline{Q}_{\tu{i}}$, satisfy $\overline{P}_{\tu{e}} \succ 0$ and  $\overline{Q}_{\tu{e}} \succeq 0$ by the full row rank of $\smat{Z_0\\ V_0}$ and \cite[Lemma~1]{bisoffi2021petersen}; (ii)~the set $\mathcal{C}_{\tu{e}}$ in~\eqref{Ce definition}, resulting from $\mathcal{D}_{\tu{e}}$, can be written in the form~\eqref{ell matr P Zc Q matrix norm} by \cite[Prop.~1]{bisoffi2021petersen}; (iii)~the rest of the proof of Theorem~\ref{th:main} is the same.
\end{remark}

As discussed in Section~\ref{sec:intro}, the invariance condition obtained in Theorem~\ref{th:main} is instrumental to design safe control laws in applications.
To effectively link invariance and safe control we introduce the so-called safe set $\mathcal{S}$, by which a user can specify all constraints on the state.
Formally, these constraints are expressed by positivity of polynomials $\sigma_1$, \dots, $\sigma_	q$ ($q \in \mathbb{Z}_{\ge 1}$) and the safe set $\mathcal{S}$ is
\begin{equation}
	\label{safe set}
	\mathcal{S}:= \{ x \in \real^n \colon \sigma_j(x) \le 0, j =1,\dots, q\}.
\end{equation}
Hence, when designing the invariance set $\mathcal{I}$, one needs to enforce the condition $\mathcal{I} \subseteq \mathcal{S}$ so that when the state belongs to $\mathcal{I}$, it will also comply with all constraints expressed by $\mathcal{S}$.
At the same time, it is of interest not only to impose $\mathcal{I} \subseteq \mathcal{S}$, but to ensure that $\mathcal{I}$ is as ``large''  as possible.
Using a classical approach as in, e.g., \cite{jarvis2005control}, define the set $\mathcal{L}_\theta$ for a nonnegative polynomial $\lambda$ (i.e., $\lambda(x) \ge 0$ for all $x$) and a nonnegative scalar $\theta$ as
\begin{equation}
	\label{variable-size set}
	\mathcal{L}_\theta : = \{ x \in \real^n \colon \lambda(x) \le \theta \}.
\end{equation}
With $\mathcal{L}_\theta$, $\mathcal{I}$ can be enlarged by imposing $\mathcal{L}_\theta \subseteq \mathcal{I}$ while maximizing $\theta \ge 0$; hence, $\mathcal{L}_\theta$ acts as a variable-size set that dilates $\mathcal{I}$ from the inside according to the shape given by the design parameter $\lambda$, which can be chosen based on the form of the safe set $\mathcal{S}$. This approach will be exemplified in Section~\ref{sec:platoon}.

Moreover, we strengthen the positivity conditions of Theorem~\ref{th:main} into more conservative, but computationally tractable, sum-of-squares conditions.
Through the safe set $\mathcal{S}$, the variable-size set $\mathcal{L}_\theta$, the strengthening of positivity conditions of Theorem~\ref{th:main} and definition
\[
r:=1+m+2n,
\]
we have the next result for data-driven safe control.
\begin{theorem}[Data-driven safe control]
	\label{cor:extra conditions}
	For given polynomials $\sigma_1$, \dots, $\sigma_q$ and nonnegative polynomial $\lambda$, consider the set $\mathcal{S}$ in~\eqref{safe set} and the set $\mathcal{L}_\theta$ in~\eqref{variable-size set} parametrized by $\theta$, along with $H$ defined in~\eqref{qlinRobInvCond3} and a given $\bar \eta > 0$.\newline
	Under the same hypothesis of Theorem~\ref{th:main}, assume the program
	\begin{subequations}\label{cor cond}%
		\begin{align}%
			& \hspace*{-1mm} \text{maximize} && \theta\\
			& \hspace*{-1mm} \text{s.t.} && \hspace*{-11mm} \{ \ell , \eta , h \} \!\subseteq \Pi, K\! \in \Pi_{m,1}, \{ s_1, \dots, s_q, \varsigma \}\! \subseteq \Sigma, \theta\! \ge 0\label{cor cond:set defining variables}\\
			&\hspace*{-1mm} && \hspace*{-11mm} \eta -\bar \eta \in \Sigma, - H \in \Sigma_r \label{cor cond:same as thm}\\
			&\hspace*{-1mm} && \hspace*{-11mm} \varsigma ( \lambda - \theta) - h \in \Sigma, \label{cor cond:var-sized set contained in inv set}\\
			&\hspace*{-1mm} && \hspace*{-11mm} s_j  h - \sigma_j \in \Sigma, j =1, \dots, q. \label{cor cond:inv set contained in safe set}
		\end{align}%
	\end{subequations}%
	 has a solution. Then, the set $\mathcal{I}$ in~\eqref{eq:invset} is invariant for the system in~\eqref{polynSysClosed} and satisfies $\mathcal{L}_\theta \subseteq \mathcal{I} \subseteq \mathcal{S}$. 
\end{theorem}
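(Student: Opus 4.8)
The plan is to decompose the three claims --- invariance of $\mathcal{I}$, the inclusion $\mathcal{L}_\theta \subseteq \mathcal{I}$, and the inclusion $\mathcal{I} \subseteq \mathcal{S}$ --- and verify each by specializing one constraint of program~\eqref{cor cond}. The guiding observation is that each SOS constraint is a certificate for a pointwise inequality or a set containment in the spirit of the one-multiplier Positivstellensatz relaxation, so nothing beyond Theorem~\ref{th:main} and elementary sign arguments is needed.

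First I would establish invariance by reducing the constraints in~\eqref{cor cond:same as thm} to the hypotheses of Theorem~\ref{th:main}. The condition $\eta - \bar\eta \in \Sigma$ forces $\eta(x) \ge \bar\eta > 0$ for every $x \in \real^n$, delivering the strict positivity $\eta(x) > 0$ required there; likewise $-H \in \Sigma_r$ means, by Definition~\ref{def:matrixSOS}, that $-H(x) = \sum_i H_i(x)^\top H_i(x) \succeq 0$, i.e.\ $H(x) \preceq 0$ for all $x$. Since the remaining decision variables $\ell, h \in \Pi$ and $K \in \Pi_{m,1}$ coincide with those of Theorem~\ref{th:main}, and program~\eqref{overapprox set I} is feasible by hypothesis, Theorem~\ref{th:main} applies verbatim and yields invariance of $\mathcal{I}$ for~\eqref{polynSysClosed}.

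Next I would prove the two inclusions by the same S-procedure pattern, each using a nonnegative SOS multiplier. For $\mathcal{L}_\theta \subseteq \mathcal{I}$, constraint~\eqref{cor cond:var-sized set contained in inv set} gives $h(x) \le \varsigma(x)(\lambda(x) - \theta)$ for all $x$; if $x \in \mathcal{L}_\theta$ then $\lambda(x) - \theta \le 0$, and since $\varsigma \in \Sigma$ implies $\varsigma(x) \ge 0$, the product is nonpositive, whence $h(x) \le 0$ and $x \in \mathcal{I}$. For $\mathcal{I} \subseteq \mathcal{S}$, for each $j$ constraint~\eqref{cor cond:inv set contained in safe set} gives $\sigma_j(x) \le s_j(x) h(x)$; if $x \in \mathcal{I}$ then $h(x) \le 0$, and $s_j \in \Sigma$ implies $s_j(x) \ge 0$, so the product is nonpositive and $\sigma_j(x) \le 0$. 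As this holds for every $j = 1, \dots, q$, we obtain $x \in \mathcal{S}$, completing $\mathcal{I} \subseteq \mathcal{S}$.

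I expect no substantial obstacle: the whole argument is a chain of sign deductions enabled by the pointwise nonnegativity of SOS multipliers. The only points needing minor care are the translations from the matrix-SOS membership $-H \in \Sigma_r$ to $H(x) \preceq 0$ and from $\eta - \bar\eta \in \Sigma$ to strict positivity $\eta(x) > 0$ --- both immediate from the definitions, yet essential for invoking Theorem~\ref{th:main} unchanged.
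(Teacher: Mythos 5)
Your proof is correct and follows the same decomposition as the paper: reduce \eqref{cor cond:same as thm} to the hypotheses of Theorem~\ref{th:main} to get invariance, then handle the two inclusions separately. The first part is identical to the paper's. The only divergence is in how the inclusions are certified. The paper proves $\mathcal{I} \subseteq \mathcal{S}$ by casting it as an empty-set condition, invoking the Positivstellensatz (Fact~\ref{fact:psatz}) to get an equivalent algebraic certificate $s_{j,0} - s_{j,1}h + s_{j,2}\sigma_j - s_{j,3}h\sigma_j + \sigma_j^{2k_j} = 0$, and then showing that constraint~\eqref{cor cond:inv set contained in safe set} yields a particular instance of that certificate. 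You instead read \eqref{cor cond:inv set contained in safe set} directly as the pointwise inequality $\sigma_j(x) \le s_j(x)h(x)$ and conclude by a sign argument on $\mathcal{I}$ (and analogously for $\mathcal{L}_\theta \subseteq \mathcal{I}$, which the paper omits as symmetric). The two routes are logically equivalent for the sufficiency direction being proved; yours is more elementary and arguably more transparent, while the paper's P-Satz derivation additionally documents where the SOS constraints come from as relaxations of an exact (necessary and sufficient) certificate, i.e., it makes the conservatism of the program explicit. No gap either way.
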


\begin{table*}[htb]
	\centering
	\begin{tabular}{c|c|c|c|c|c|c|c|c|c|c|c}
		$\gamma_1$ & $\beta_1$ &  $\alpha_1$ & $\gamma_2$ & $\beta_2$ & $\alpha_2$ & $\tau_{\tu{h}}$ & $d_0$ & $d_1$ & $v_{\tu{M}}$ & $\bar{v}$ & $\bar{d}$  \\ \hline
		0.005$\frac{\text{N}}{\text{kg}}$ & 0.1$\frac{\text{N}\,\text{s}}{\text{kg}\,\text{m}}$ & 0.02$\frac{\text{N}\, \text{s}^2}{\text{kg}\,\text{m}^2}$ & 0.005$\frac{\text{N}}{\text{kg}}$ & 0.2 $\frac{\text{N}\, \text{s}}{\text{kg}\, \text{m}}$ & 0.04$\frac{\text{N}\, \text{s}^2}{\text{kg}\, \text{m}^2}$ & 0.2$\text{s}$ & 5$\text{m}$ & 10$\text{m}$ & 22.2$\frac{\text{m}}{\text{s}}$ & 8.5$\frac{\text{m}}{\text{s}}$ & 8$\text{m}$
	\end{tabular}
	\smallskip
	\caption{Values of the parameters for the two cars platoon simulations in Section~\ref{sec:platoon}.}
	\label{tab:Fx}
\end{table*}

\begin{proof}
	\eqref{cor cond:same as thm} implies that for all $x\in \mathbb{R}^n$, $\eta(x) \ge \bar \eta > 0$ and $H(x) \preceq 0$ (by Definition~\ref{def:matrixSOS}) and, under the same hypothesis of Theorem~\ref{th:main}, these two conditions were shown in Theorem~\ref{th:main} to imply that $\mathcal{I}$ is invariant for~\eqref{polynSysClosed}.
	The statement is then proven if we show that with~\eqref{cor cond:set defining variables}, \eqref{cor cond:var-sized set contained in inv set} implies $\mathcal{L}_\theta \subseteq \mathcal{I}$ and \eqref{cor cond:inv set contained in safe set} implies $\mathcal{I} \subseteq \mathcal{S}$.
	Since the reasoning is the same, we show only the latter.
	$\mathcal{I} \subseteq \mathcal{S}$ is equivalent to the set inclusion
	\begin{equation*}
		\{ x \in \mathbb{R}^n \colon h(x) \leq 0 \} \subseteq \{ x \in \mathbb{R}^n \colon \sigma_j(x) \leq 0 \}
	\end{equation*}
	holding for all $j = 1,\dots,q$ or, equivalently, to the empty-set condition
	\begin{equation*}
		\{ x \in \mathbb{R}^n \colon -h(x) \geq 0, \sigma_j(x) \geq 0, \sigma_j(x) \neq 0 \} = \emptyset
	\end{equation*}
	holding for all $j = 1,\dots,q$
	or, equivalently by Fact~\ref{fact:psatz}, to the existence, for all $j = 1,\dots,q$, of polynomials $s_{j,0}$, $s_{j,1}$, $s_{j,2}$, $s_{j,3}$ in $\Sigma$ and $k_j \in \mathbb{Z}_{\ge 0}$, $j = 1,\dots,q$ such that 
	\begin{equation*}
		s_{j,0} - s_{j,1} h + s_{j,2}\sigma_j - s_{j,3} h \sigma_j+\sigma_j^{2k_{j}} = 0.
	\end{equation*}
	This is implied by the existence, for $j = 1$,\dots, $q$, of $k_j=1$, $s_{j,0}=0$, $s_{j,1}=0$ and $s_{j,2}$, $s_{j,3}$ in $\Sigma$ such that $\sigma_j(s_{j,2}-s_{j,3} h + \sigma_j) = 0$, which is implied by~\eqref{cor cond:inv set contained in safe set}.
\end{proof}

	To conclude the section, we show in the next remark how input constraints can be readily incorporated in the proposed design to account for actuator limitations.
	
	\begin{remark}
		Suppose that input $u$ needs to be bounded in norm, i.e., $|u| \le u_{\tu{M}}$ for some $u_{\tu{M}} >0$.
		This constraint is enforced by imposing that for each $x$, $|K(x)| \le u_{\tu{M}}$, which is equivalent to $K(x)^\top K(x) \le u_{\tu{M}}$ and $\smat{u_{\tu{M}} & K(x)^\top \\ K(x) & I} \succeq 0$.
		This condition can be relaxed as $\smat{u_{\tu{M}} & K^\top \\ K & I} \in \Sigma_{m+1}$ and added to the conditions in Theorem~\ref{cor:extra conditions}.
	\end{remark}%

\section{Numerical example: car platooning}
\label{sec:platoon}
As a safety-critical system, we consider two cars moving in a platoon formation.
The system can be modeled as
\begin{subequations}
	\label{platoon:physical coords}
	\begin{align}
		\dot x_1 & = u_1 -\gamma_1 -\beta_1 x_1 - \alpha_1 x_1^2\\
		\dot x_2 & = u_2 -\gamma_2 -\beta_2 x_2 - \alpha_2 x_2^2\\
		\dot x_3 & = x_1 - x_2
	\end{align}
\end{subequations}
where: the components $x_1$, $x_2$, $x_3$ of state $x$ represent respectively the velocity of the front vehicle, the velocity of the rear vehicle and the relative distance between the two; the components $u_1$, $u_2$ of input $u$ represent the forces normalized by vehicle mass; $\gamma_k$, $\beta_k$, $\alpha_k$ for $k=1,2$ are the static, rolling and aerodynamic-drag friction coefficients normalized by vehicle mass.
We impose safety constraints by the set
\begin{align}
	\mathcal{S} & := \{ x \in \mathbb{R}^3 \colon  d_0 + \tau_{\tu{h}} x_2 \leq x_3,\ x_3 \leq d_1, \notag \\
	& \hspace*{10mm} 0 \leq x_1,\ x_1 \leq v_{\tu{M}},\ 0 \leq x_2,\ x_2 \leq v_{\tu{M}} \} \notag \\
	&=: \{ x \in \mathbb{R}^3 \colon \sigma_1(x) \leq 0, \dots, \sigma_6(x) \leq 0 \}
	\label{eq:safeset}
\end{align}
where $d_0 + \tau_{\tu{h}} x_2$ is a relative distance required to avoid collisions with the front vehicle ($d_0 > 0$ is a standstill distance and $\tau_{\tu{h}} > 0$ is a time headway), $d_1$ is a distance required to keep the benefits of platooning (especially, aerodynamic drag reduction), and $v_{\tu{M}}$ is a maximum velocity allowed on the road.
Finally, we consider as point of interest $\bar x := (\bar v,\bar v,\bar d)$ where $\bar v$ is a predefined cruise velocity and $\bar d$ is a reference safety distance. 
Numerical values of parameters are in Table~\ref{tab:Fx}.

The numerical program to find invariant set and controller is in Algorithm~\ref{alg:cars} and is implemented in Matlab with YALMIP \cite{lofberg2004yalmip,Lofberg2009} and MOSEK. 
We now comment Algorithm~\ref{alg:cars}, which consists of an initialization (lines~1-2) and a main part (lines~3-15) made of two steps.

\begin{algorithm}[hbt!]
	\caption{Car Platooning}\label{alg:cars}
	\begin{algorithmic}[1]
		\State Excite the system around an equilibrium point of interest, collect input/state data and obtain a Lyapunov function for the linearized system from data by \cite[Th. 2]{bisoffi2021petersen}. 
		\State \textbf{initialize:} $\iota=0$ (a counter), $\theta_\iota=\theta_0=0.01$, $\epsilon=0.01$, $\eta(x) = 1$, $h$ equal to the Lyapunov function found above. 
		\Repeat
		\State Find $\ell \in \Pi$, $K \in \Pi_{m,1}$ and $s_{1}, \ldots, s_{6}, \varsigma \in \Sigma$
		\State subject to \quad $-H \in \Sigma_r$
		\State \phantom{subject to} \quad $\varsigma (\lambda -\theta_\iota) - h \in \Sigma$
		\State \phantom{subject to} \quad $s_j h -\sigma_j \in \Sigma, \quad j = 1,\dots,6$.
		\State Update $\ell$, $K$, $s_{1}, \ldots, s_{6}$, $\varsigma$. \\
		\State Maximize  $\theta$ \quad over $\eta, h\in \Pi$ and $\theta \ge 0$, 
		\State subject to \quad $\eta \in \Sigma,\ -H \in \Sigma_r$
		\State \phantom{subject to} \quad $\varsigma(\lambda -\theta)  - h  \in \Sigma$
		\State \phantom{subject to} \quad $s_j h -\sigma_j \in \Sigma, \quad j = 1,\dots,6$.
		\State Update $\eta$, $h$, $\theta_\iota = \theta$, $\iota \gets \iota+1$.
		\Until{$\theta_\iota - \theta_{\iota-1}$  is less than some tolerance ($10^{-3}$).}
	\end{algorithmic}
\end{algorithm}

As for the initialization of Algorithm~\ref{alg:cars}, we set the degrees of polynomials $h$, $\eta$, $s_j$ ($j = 1,\dots,6$) and $\varsigma$ to $4,2,2,2$.
Moreover, the shape of the safe set $\mathcal{S}$ in~\eqref{eq:safeset} (light blue set in Figure~\ref{fig:car2-c}) is wider in the coordinates $x_1$ and $x_2$ and narrow in the coordinate $x_3$.
Hence, we dilate the invariant set $\mathcal{I}$ through an ellipsoid shaped similarly to $\mathcal{S}$ and with center $\bar x$ since we would like $\mathcal{I}$ to contain $\bar x$; i.e., we dilate $\mathcal{I}$ through
\begin{equation*}
	\mathcal{L}_\theta : = \! \{ x \in \real^3 \colon ( x - \bar x)^\top\! \smat{0.02& 0 & 0\\ 0 & 0.05 & 0\\ 0 & 0 & 1} ( x - \bar x) =: \lambda(x) \le \theta \}
\end{equation*}
as in~\eqref{variable-size set}.
Finally, a guess of $h$ is needed to initialize the iterations in the procedure.
In a data-based fashion, we can use as a guess of $h$ the Lyapunov function obtained \emph{from data} by using \cite[Th.\ 2]{bisoffi2021petersen} and performing a preliminary experiment with ``small'' input and state signals around the equilibrium $\bar x$. 

As for the main part of Algorithm~\ref{alg:cars}, it corresponds to the SOS program \eqref{cor cond} in Theorem~\ref{cor:extra conditions}.
However, since \eqref{cor cond} presents products between decision variables, we first fix $h$, $\eta$ and $\theta$ in~\eqref{cor cond} and solve for the other decision variables (lines~4-8), and then fix $\ell$, $K$ and $s_1$, \dots, $s_6$, $\varsigma$ in~\eqref{cor cond} and solve for the other decision variables (lines~10-14).
Moreover, we asked $\eta - \bar \eta \in \Sigma$ in Theorem~\ref{cor:extra conditions} for (small) $\bar \eta > 0$; here, we ask the weaker condition $\eta \in \Sigma$ because, if the constraint $\eta \in \Sigma$ is feasible, interior-point algorithms automatically find \cite[p.~41]{ahmadi2008nonmonotonic} a strictly positive $\eta$, hence satisfying $\eta - \bar \eta \in \Sigma$, which we verified a-posteriori.
Finally, since $- H \in \Sigma_r$ in~\eqref{cor cond} is homogeneous with respect to $(h,\eta)$, we prune solutions by fixing the 0-degree coefficient of $h$ to a \emph{given} constant. 

We also remark that  in $H$ the terms $Z$ and $W$ appear. 
The choice of the monomials considered in~\eqref{polynSysOpen} for $Z$ and $W$ is important to obtain the best result from our solution. 
The simplest choice is to consider all monomials in $x_1$, $x_2$, $x_3$ up to a maximum degree; with noisy data, however, the coefficient of each monomial becomes uncertain and coping with it results in more conservative solutions. 
A smarter choice is to include high-level prior knowledge \cite{ahmadi2021learning}. 
For platooning, we use $Z(x) = \smat{x_1& x_2 & x_3 & x_1^2 & x_2^2}^\top$ and $W(x) = I$, since we know from physical considerations that the time derivative of the relative distance $x_3$ depends only on the velocities $x_1$ and $x_2$, and $\dot x_1$, $\dot x_2$ depend only on $x_1$, $x_1^2$, $x_2$, $x_2^2$ and $u$.

\begin{figure*}[!htb]
	\normalsize
	{
		\setcounter{MaxMatrixCols}{20}
		\begin{equation}
			\begin{split}
				h(x) &= 90.30 + 202.66x_1^2 + 364.18x_2^2 - 
				458.07x_1x_2 + 28.19x_1x_3 - 283.29x_2x_3 + 
				158.34x_3^2 - 0.95x_1^3  \\ 
				&- 0.63x_1^2x_2 - 47.29x_1^2x_3 + 3.70x_1x_2^2 - 3.89x_2^3 - 
				83.13x_2^2x_3 + 108.33x_1x_2x_3 - 
				7.36x_1x_3^2 + 65.74x_2x_3^2\\ 
				& - 38.24x_3^3 + 0.10x_1^4 - 0.10x_1^3x_2 + 
				0.36x_1^2x_2^2 - 0.25x_1x_2^3 + 
				0.29x_2^4 - 0.21x_1^3x_3 - 
				0.36x_1^2x_2x_3 - 0.43x_1x_2^2x_3  \\ 
				& - 0.46x_2^3x_3 +3.47x_1^2x_3^2 - 
				5.93x_1x_2x_3^2 + 6.15x_2^2x_3^2 - 
				0.18x_1x_3^3 - 4.95x_2x_3^3 + 2.75x_3^4.
			\end{split}	\label{eq:resulth}
		\end{equation}
	}
	\hrulefill
	\vspace*{4pt}
\end{figure*}

By using Algorithm \ref{alg:cars}, we obtain an invariant set $\mathcal{I} = \{ x \in \real^3 \colon h(x) {\color{cyan}} \leq 0 \}$ with $h$ as in~\eqref{eq:resulth}, displayed over two columns, and a controller $u = K(x) = \smat{K_1(x)\\ K_2(x)}$ with
\begingroup
\thinmuskip=1.mu plus 1mu
\medmuskip=2.mu plus 2mu
\thickmuskip=3.mu plus 3mu
\begin{align*}	
	K_1(x) &= 14.39x_3 -10.45x_1 + 44.05x_2 - 
	0.85x_1^2 - 4.24x_2^2, \\
	K_2(x) &=71.28x_3 -2.47x_1 - 23.32x_2 + 0.10x_1^2 - 4.09x_2^2.
\end{align*}
\endgroup
We removed monomials coefficients smaller than $10^{-5}$ in $h$ and $K$.
We compared our data-driven solution in Algorithm~\ref{alg:cars} against a model-based solution that knows perfectly the model, thereby providing a baseline for what we can achieve with the data-based scheme.
The model-based implementation, which is the counterpart of~\eqref{cor cond}, corresponds to
\begin{align*}
	& \text{maximize} && \theta\\
	& \text{s.t.} && \hspace*{-11mm} \{ \ell , h \} \subseteq \Pi, K \in \Pi_{m,1}, \{ s_1, \dots, s_6, \varsigma \} \subseteq \Sigma, \theta \ge 0\\
	& && \hspace*{-11mm}-(\ell h + \epsilon + \tfrac{\partial h}{\partial x} \smat{\sa{A} & \sa{B} } \smat{Z\\ W K}) \in \Sigma
	\\
	& && \hspace*{-11mm} \varsigma ( \lambda - \theta) - h \in \Sigma,\,\, s_j  h - \sigma_j \in \Sigma, j =1, \dots, 6. 
\end{align*}
In Figure~\ref{fig:car2-c}, we can see that the model-based and the data-driven solutions are comparable as for the sizes of the resulting invariant sets for data points affected by a disturbance satisfying $|d|\le 10^{-3}$. In both cases safety constraints are not violated since both invariant sets are within $\mathcal{S}$.
In Figure~\ref{fig:car2-phase}, the invariant set $\mathcal{I}$ of the data-driven solution is plotted together with trajectories of vector field \eqref{platoon:physical coords} in closed loop with controller $u = K(x)$. Trajectories are initialized close to the boundary of $\mathcal{I}$ to show that once in the set $\mathcal{I}$, they never leave it, thereby confirming that $\mathcal{I}$ is invariant.

\begin{figure}[!htb]
	\centering
	\includegraphics[width=1\linewidth]{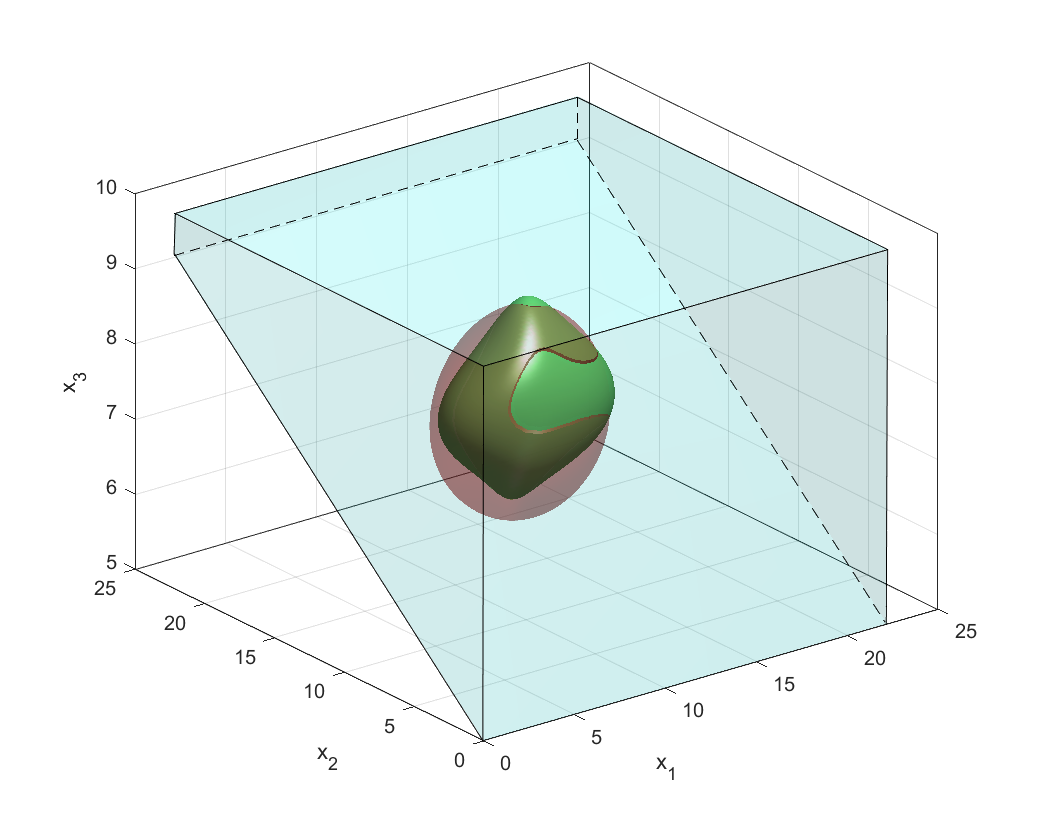}
	\caption{Invariant set for the model-based case in light red and for the data-driven case in green ($|d| \leq 10^{-3}$, $T=1000$). The safe set in light blue highlights that the two invariant sets comply with safety constraints.}
	\label{fig:car2-c}
\end{figure}

\begin{figure}
	\centering
	\includegraphics[width=1\linewidth]{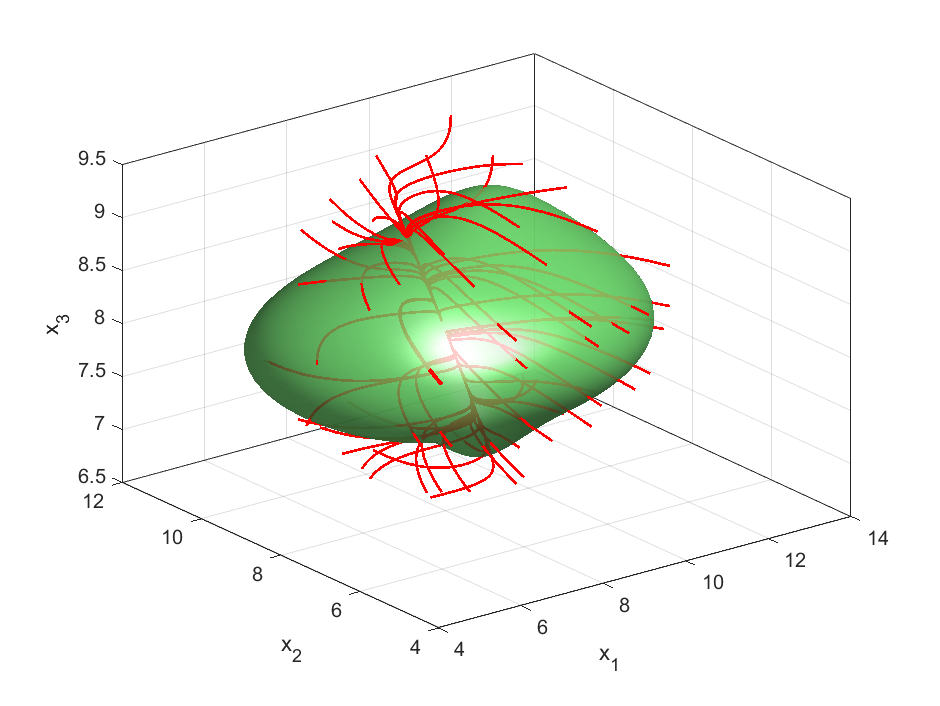}
	\caption{To show that the invariant set $\mathcal{I}$ (green) given by~\eqref{eq:resulth} is invariant, we simulate the closed loop system with initial conditions slightly outside $\mathcal{I}$, close to the boundary. 
	All simulated trajectories (red) never leave the set after entering it.}
	\label{fig:car2-phase}
\end{figure}

\section{Conclusions}
\label{sec:conclusion}

In this work we addressed the problem of enforcing invariance for a polynomial system based on data.
We assumed that data are corrupted by noise, whose nature and characteristic are unknown except for an (instantaneous) bound. 
The presence of noise resulted then in a set of dynamics consistent with data, which we over-approximated via matrix ellipsoids and took into account in the design.
Our solution provided a data-dependent SOS optimization program to obtain a state feedback controller and an invariant set for the closed loop system and we optimized the size of the invariant set  under the constraint that it remains contained in a user-defined safety set.
Finally, we tested our data-driven algorithm on a platooning example where we showed that, for a reasonable noise level, our solution compares well with the case of perfect model knowledge.
In this work we focused on enforcing invariance around one predefined state and it would be interesting to extend our data-driven design for safety around a prescribed trajectory in future research.

\bibliographystyle{plain}
\bibliography{references} 

\end{document}